\documentclass[10pt, conference, letterpaper]{IEEEtran}
\IEEEoverridecommandlockouts

\usepackage{algorithmic,algorithm}
\usepackage[utf8]{inputenc} 
\usepackage{url}            
\usepackage{booktabs}       
\usepackage{amsfonts}       
\usepackage{nicefrac}       
\usepackage{microtype}      

\usepackage{dsfont}
\usepackage{multirow}
\usepackage{graphicx}
\usepackage{color,soul}
\usepackage{amsmath,amssymb,amsfonts,amsthm}
\usepackage{enumerate}
\usepackage{epstopdf}
\graphicspath{ {images/} }
\usepackage{tikz}
\usepackage{mathtools}
\usepackage{tabulary}
\usepackage{algorithm}
\usepackage{caption}

\usepackage{xspace}
\usepackage{subfigure}
\usepackage{cleveref}
\newtheorem{theorem}{Theorem}

\newtheorem{lemma}{Lemma}
\newtheorem{remark}{Remark}
\newtheorem{proposition}{Proposition}

\newtheorem{assumption}{Assumption}


\newcommand{\cA}{\mathcal{A}}

\newcommand{\cS}{\mathcal{S}}

\newtheorem{definition}{Definition}

\newcommand{\qw}{\texttt{Q-Whittle}\xspace}

\newcommand{\qwhittle}{\texttt{Q$^+$-Whittle}\xspace}
\newcommand{\qwhittleLFA}{\texttt{Q$^+$-Whittle-LFA}\xspace}

\begin{document}


\title{Whittle Index based Q-Learning for Wireless Edge Caching with Linear Function Approximation}
\author{Guojun~Xiong,~\IEEEmembership{Student Member,~IEEE,}
            Shufan~Wang,
            Jian~Li,~\IEEEmembership{Member,~IEEE,} and~Rahul Singh,~\IEEEmembership{Member,~IEEE} 
\IEEEcompsocitemizethanks{\IEEEcompsocthanksitem G. Xiong, S. Wang,  and J. Li are with Binghamton University, State University of New York, Binghamton, NY, 13902.
E-mail: \{gxiong1, swang214, lij\}@binghamton.edu.}
\IEEEcompsocitemizethanks{\IEEEcompsocthanksitem R. Singh is with Indian Institute of Science. Email: rahulsingh@iisc.ac.in.}
}

\maketitle

\begin{abstract}
We consider the problem of content caching at the wireless edge to serve a set of end users via unreliable wireless channels so as to minimize the average latency experienced by end users due to the constrained wireless edge cache capacity. We formulate this problem as a Markov decision process, or more specifically a restless multi-armed bandit problem, which is provably hard to solve.  We begin by investigating a discounted counterpart, and prove that it admits an optimal policy of the threshold-type.  We then show that this result also holds for average latency problem.  Using this structural result, we establish the indexability of our problem, and employ the Whittle index policy to minimize average latency. Since system parameters such as content request rates and wireless channel conditions are often unknown and time-varying, we further develop a model-free reinforcement learning algorithm  dubbed as \qwhittle that relies on Whittle index policy. However, \qwhittle requires to store the Q-function values for all state-action pairs, the number of which can be extremely large for wireless edge caching. To this end, we approximate the Q-function by a parameterized function class with a much smaller dimension, and further design a \qwhittle algorithm with linear function approximation, which is called \qwhittleLFA.  We provide a finite-time bound on the mean-square error of \qwhittleLFA.  Simulation results using real traces demonstrate that \qwhittleLFA yields excellent empirical performance.

\end{abstract}

\begin{IEEEkeywords}
Wireless Edge Caching, Restless Bandits, Whittle Index Policy, Reinforcement Learning, Finite-Time Analysis
\end{IEEEkeywords}


\section{Introduction}\label{sec:intro}

\IEEEPARstart{T}{he} dramatic growth of wireless traffic due to an enormous increase in the number of mobile devices is posing many challenges to the current mobile network infrastructures.~In addition to this increase in the volume of traffic, many emerging applications such as Augmented/Virtual Reality, autonomous vehicles and video streaming, are \textit{latency-sensitive}. In view of this, the traditional approach of offloading the tasks to remote data centers is becoming less attractive. Furthermore, since these emerging applications typically require unprecedented computational power, it is not possible to run them on mobile devices, which are typically resource-constrained.

To provide such stringent timeliness guarantees, mobile edge computing architectures have been proposed as a means to improve the quality of experience (QoE) of end users, which move servers from the cloud to edges, often wirelessly that are closer to end users.  Such edge servers are often empowered with a small wireless base station, e.g., the storage-assisted future mobile Internet architecture and cache-assisted 5G systems \cite{andrews2012femtocells}.
By using such edge servers, content providers are able to ensure that contents such as movies, videos, software, or services are provided with a high QoE (with minimal latency). The success of edge servers relies upon ``content caching'', for which popular contents are placed at the cache associated with the wireless edge.  If the content requested by end users is available at the wireless edge, then it is promptly delivered to them. Unfortunately, the amount of contents that can be cached at the wireless edge is often limited by the wireless edge cache capacity.  These issues are further exacerbated when the requested content is delivered over \emph{unreliable} channels.

In this work, we are interested in minimizing \textit{the average latency} incurred while delivering contents to end users, which are connected to a wireless edge via unreliable channels. We design dynamic policies that decide \textit{which contents should be cached at the wireless edge so as to minimize the average latency of end users.}

\subsection{Whittle Index Policy for Wireless Edge Caching} We pose this problem as a Markov decision process (MDP) \cite{puterman1994markov} in Section~\ref{sec:model}. Here, the system state is the number of outstanding requests from end users for each content that needs to be satisfied, and the cost is measured as the latency experienced by end users to obtain the requested contents.  The available actions are the choices of caching each content or not given that the wireless edge cache capacity is much smaller than the total number of distinct requested contents.
This MDP turns out to be an infinite-horizon average-cost restless multi-armed bandit (RMAB) problem \cite{whittle1988restless}. Even though in theory this RMAB can be solved by using relative value iteration \cite{puterman1994markov}, this approach suffers from {the curse of dimensionality}, and fails to provide any insight into the solution. Thus, it is desirable to derive low-complexity solutions and provide guarantees on their performance.  A celebrated policy for RMAB is \textit{the Whittle index policy} \cite{whittle1988restless}.  We propose to use the Whittle index policy to solving the above problem for wireless edge caching.

Following the approach taken by Whittle \cite{whittle1988restless}, we begin by relaxing the hard constraint of the original MDP, which requires that the number of cached contents at each time is \textit{exactly} equal to the cache size. These are relaxed to a constraint which requires that the number of cached contents is equal to the cache size \textit{on average}. We then consider the Lagrangian of this relaxed problem, which yields us a set of decoupled average-cost MDPs, which we call the per-content MDP. Instead of optimizing the average cost (latency) of this per-content MDP, we firstly consider a discounted per-content MDP, and prove that the optimal policy for each discounted per-content MDP has an appealing \textit{threshold structure}. This structural result is then shown to also hold for \textit{the average latency problem}. We use this structural result to show that our problem is indexable~\cite{whittle1988restless}, and then derive Whittle indices for each content.  Whittle index policy then prioritizes contents in a decreasing order of their Whittle indices, and caches the maximum number of content constrained by the cache size. Whittle index policy is computationally tractable since its complexity increases linearly with the number of contents.  Moreover it is known to be asymptotically optimal \cite{weber1990index,verloop2016asymptotically} as the number of contents and the cache size are scaled up, while keeping their ratio as a constant. Our contribution in Section~\ref{sec:index} is non-trivial since establishing indexability of RMAB problems is typically intractable in many scenarios, especially when the probability transition kernel of the MDP is convoluted \cite{nino2007dynamic}, and Whittle indices of many practical problems remain unknown except for a few special cases.

\subsection{Whittle Index-based Q-learning with Linear Function Approximation for Wireless Edge Caching}
The Whittle index policy needs to know the controlled transition probabilities of the underlying MDPs, which in our case amounts to knowing the statistics of content request process associated with end users, as well as the reliability of wireless channels.  However, these parameters are often unknown and time-varying.  Hence in Section~\ref{sec:learning}, we design an efficient reinforcement learning (RL) algorithm to make optimal content caching decisions dynamically without knowing these parameters.  We do not directly apply off-the-shelf RL methods such as UCRL2~\cite{jaksch2010near} and Thompson Sampling~\cite{gopalan2015thompson} since the size of state-space grows exponentially with the number of contents, and hence the computational complexity and the learning regret would also grow exponentially. Thus, the resulting algorithms would be too slow to be of any practical use.

To overcome these limitations, we first derive a model-free RL algorithm dubbed as \qwhittle, which is largely inspired by the recent work \cite{avrachenkov2020whittle} that proposed a tabular Whittle index-based Q-learning algorithm, which we call \qw for ease of exposition. The key aspect of \qw \cite{avrachenkov2020whittle} is that the updates of Q-function values and Whittle indices form a two-timescale stochastic approximation (2TSA) \cite{borkar2009stochastic,konda2000actor} with the former operating on a faster timescale and the latter on a slower timescale.  Though \cite{avrachenkov2020whittle} provided a rigorous asymptotic convergence analysis, such a 2TSA usually suffers from slow convergence in practice (as we numerically verify in Section~\ref{sec:evaluation}).  To address this limitation, our key insight is that we can further leverage the threshold-structure of the optimal policy to each per-content MDP to learn Q-function values of \textit{only} those state-action pairs which are visited under the current threshold policy, rather than all state-action pairs as in \qw.  This novel update rule enables \qwhittle to significantly improve the sample efficiency of \qw using the conventional $\epsilon$-greedy policy.

We note that \qwhittle needs to store Q-function values for all state-action pairs, the number of which can be very large for wireless edge caching.  To address this difficulty, we further study \qwhittle with linear function approximation (LFA) by using low-dimensional linear approximation of Q-function.  We call this algorithm the \qwhittleLFA, which can be viewed through the lens of a 2TSA.  We provide a finite-time bound on the mean-square error of \qwhittleLFA in Section~\ref{sec:finite-time-analysis}.  To the best of our knowledge, our work is the first to consider a model-free RL  approach with LFA towards a Whittle index policy in the context of wireless edge caching over unreliable channels, and the first to provide a finite-time analysis of a Whittle index based Q-learning with LFA.

Finally, we provide extensive numerical results using both synthetic and real traces to support our theoretical findings in Section~\ref{sec:evaluation}, which demonstrate that \qwhittleLFA produces significant performance gain over state of the arts.

\section{Related Work}\label{sec:related}

Although edge caching has received a significant amount of attentions, we are not aware of any prior work proposing an analytical model for latency-optimal wireless edge caching over unreliable channels, designing a computationally efficient index based policy and a novel RL augmented algorithm in face to unpredictable content requests and unreliable channels.  We provide an account of existing works in two areas closely related to our work: content caching and restless bandits.

\textbf{Content Caching} \cite{paschos2018role} has been studied in numerous domains with different objectives such as minimizing expected delay \cite{poularakis2018distributed,vu2018latency}, operational costs \cite{abolhassani2020achieving} or
maximizing utility \cite{dehghan2019utility}. The joint caching and request routing has also been investigated, e.g., \cite{ioannidis2016adaptive,li2018dr,mahdian2019kelly}. Most prior works formulated the problem as a constrained/stochastic optimization problem, etc.  None of those works provided a formulation using the RMAB framework and developed an index based caching policy.  Furthermore, all above works assumed full knowledge of the content request processes and hence did not incorporate a learning component. A recent line of works considered  caching from an online learning perspective, e.g., \cite{garg2019online,wei2021wireless,zhao2018red,paschos2019learning,bhattacharjee2020fundamental,salem2021no},
and used the performance metric of learning regret or competitive ratio. Works such as \cite{sadeghi2017optimal,sadeghi2019deep,somuyiwa2018reinforcement,wang2020intelligent}
used deep RL methods.  However, deep RL methods  lack of theoretical performance guarantees.  Our model, objective and formulation significantly depart from those considered in aforementioned works, where we pose the wireless edge caching problem as a MDP and develop the Whittle index policy that can be easily learned through a model-free RL framework.

\textbf{Restless Multi-Armed Bandit} (RMAB) is a general framework for sequential decision making problems, e.g., \cite{larrnaaga2016dynamic,larranaga2014index}.  However, RMAB is notoriously intractable \cite{papadimitriou1994complexity}. One celebrated policy is the Whittle index policy \cite{whittle1988restless}.
However, Whittle index is well-defined only when the {indexability condition} is satisfied, which is in general hard to verify.  Additionally, the application of  Whittle index requires full system knowledge.
 Thus it is important to examine RMAB from a learning perspective, e.g., \cite{dai2011non,liu2012learning,tekin2012online,ortner2012regret,jung2019regret}.
 However, these methods did not exploit the special structure available in RMAB and contended directly with an extremely high dimensional state-action space yielding the algorithms to be too slow to be useful.  Recently,  RL based algorithms have been developed \cite{avrachenkov2020whittle,fu2019towards,wang2020restless,robledo2022qwi,robledo2022tabular,xiong2022reinforcement,xiong2022index,nakhleh2022deeptop,xiong2023reinforcement}
to explore the problem structure through index policies.  However, \cite{avrachenkov2020whittle,fu2019towards,robledo2022qwi,robledo2022tabular} lacked finite-time performance analysis and multi-timescale SA algorithms often suffer from slow convergence.   \cite{wang2020restless,xiong2022reinforcement} depended on a simulator for explorations which cannot be directly applied here since it is difficult to build a perfect simulator in complex wireless edge environments. \cite{nakhleh2022deeptop} leveraged the threshold policy via a deep neural network without finite-time performance guarantees. \cite{xiong2022index,xiong2023reinforcement} either studied a finite-horizon setting or developed model-based RL solutions, while we consider an infinite-horizon average-cost setting and develop model-free RL algorithms.   Specifically, we propose \qwhittleLFA, a low-complexity Whittle index based Q-learning algorithm with linear function approximation.  Our finite-time analysis of \qwhittleLFA further distinguishes our work.

\begin{figure}
	\centering
	\includegraphics[width=0.3\textwidth]{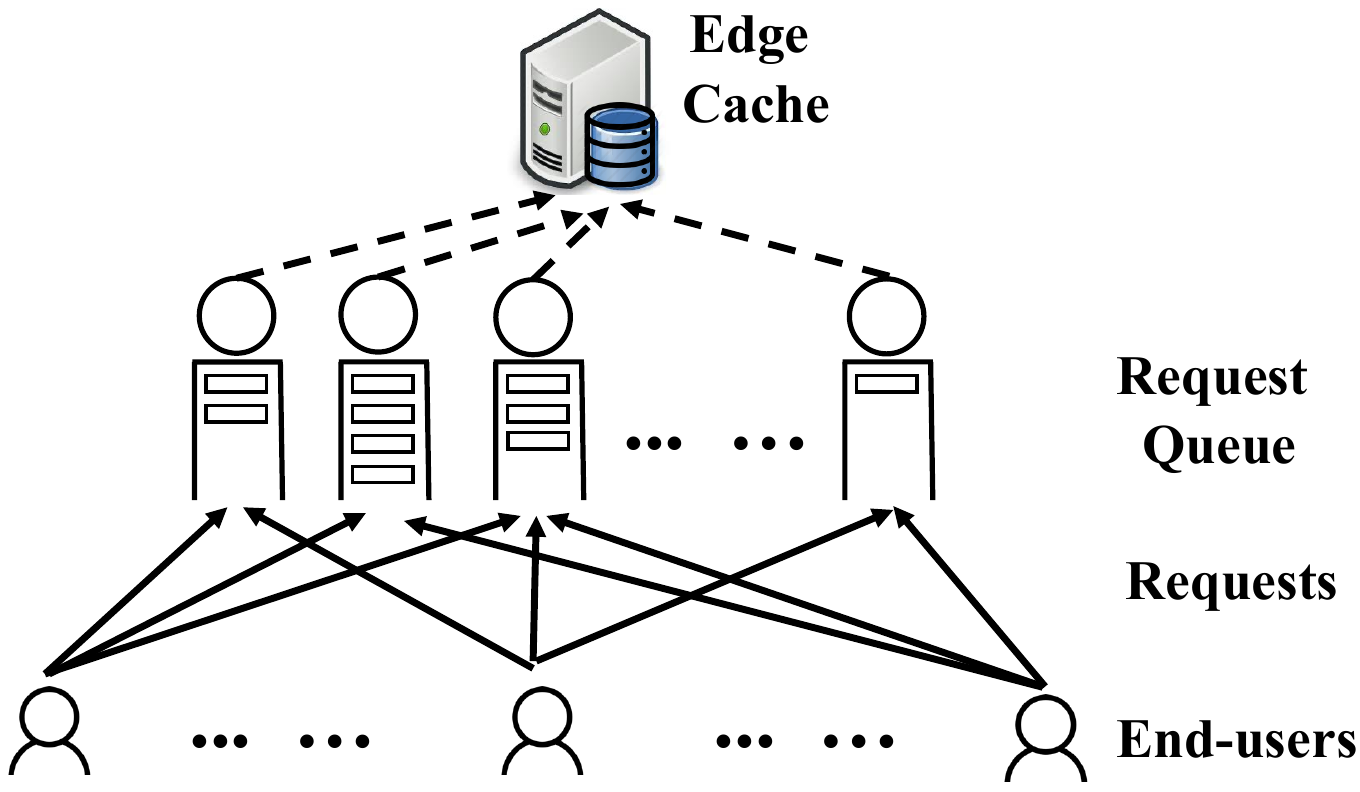}
	\vspace{-0.05in}
	\caption{Wireless edge caching over unreliable channels.}
	\label{fig:model}
	\vspace{-0.1in}
\end{figure}

\section{System Model and Problem Formulation}\label{sec:model}
In this section, we present the system model and formulate the average latency minimization problem for wireless edge caching over unreliable channels.

\subsection{System Model}

Consider a wireless edge system as shown in Figure~\ref{fig:model}, where the wireless edge is equipped with a cache of size $B$ units to store contents that are provided to end users. We denote the set of distinct contents as $\mathcal{M}=\{1, \cdots, M\}$ with $|\mathcal{M}|=M$. We assume that all contents are of unit size. End users make requests for different contents to the wireless edge. If the requested content is available at the wireless edge, then it is delivered to end users directly through a wireless channel that is unreliable \cite{wei2021wireless}. The goal of the wireless edge is to decide at each time which contents to cache so that the cumulative value of the average content request latency experienced by end users is minimal.

\textbf{Content Request and Delivery Model.} We assume that requests for content $m\in\mathcal{M}$ arrive at the wireless edge from end users according to a Poisson process\footnote{Poisson arrivals have been widely used in the literature, e.g., \cite{ioannidis2016adaptive,li2018dr,wei2021wireless} and references therein.  However, our model holds for general stationary process  \cite{baccelli13}  and our model-free RL based algorithm and analysis in Sections~\ref{sec:learning} and~\ref{sec:finite-time-analysis} holds for any request process.} with arrival rate $\lambda_m$.   To each content $m$, we associate a ``request queue'' at the wireless edge, which stores the number of outstanding requests for content $m$ at time $t$. The queue length associated with the number of such requests at time $t$ is denoted by $S_{m, t}$. The rationality of this model is that the number of content requests may be larger than the service capacity of the wireless edge server \cite{atre2020caching,xiong2023reinforcement}.  Hence, the content requested from an end user might not be served immediately so that there will be a latency associated with the end user getting content.  Another point is due to the fact that the wireless channels between the edge cache and end users are unreliable. This motivates us to consider a queuing model which captures the latency experienced by end users.

The time taken by the wireless edge  to deliver the content, i.e. serve end users' requests, is modeled by appropriate random variables, which heavily relies on the wireless channel quality between the wireless edge and end users\footnote{Though the wireless channel can be explicitly modeled as in physical-layer communication models \cite{vu2018latency}, it requires additional beamforming and channel estimations, which is out of the scope of this work.  With our queue model, the effect of wireless channel is incorporated in content departure rate as in \cite{wei2021wireless}. }.
More specifically, we assume that the time taken to deliver content $m$ to end users is exponentially distributed with mean $1\slash \nu_m$ \cite{wei2021wireless}. The service times are independently across different contents and requests. Thus, when $S_{m,t}\geq 1$, the request of content $m$ departs from the corresponding request queue with rate $\nu_m$.

\textbf{Decision Epochs.}  The decision epochs\slash times are the moments when the states of request queues change. At each decision epoch\slash time $t$, the wireless edge determines for each content whether or not it should be cached, and then delivers the cached contents to the desired end users.

\subsection{System Dynamics and Problem Formulation}\label{sec:mdp}
We now formulate the problem of average latency minimization for the above model as a MDP.

\textbf{States.} We denote the state of the wireless edge at time $t$ as $\bold{S}_t : =(S_{1,t}, \cdots, S_{M,t})\in\mathbb{N}^M$, where $S_{m,t}$ is the number of outstanding requests for content $m\in\mathcal{M}$. To guarantee the stability of the Markov chain, we assume that $S_{m,t}, \forall m, t$ is upper bounded by $S_{max}$, which can be arbitrarily large but bounded.  For ease of readability, we denote the state-space associated with $\bold{S}_t$ as $\mathcal{S}$.

\textbf{Actions.} At each time $t$, for each content $m$, the wireless edge has to make a decision regarding whether or not to  cache it.  We use $A_{m,t}$ to denote the action for content $m$ at time $t$. Thus, we let $A_{m,t}=1$ if it is cached, and $A_{m,t}=0$ otherwise.~We let $\mathcal{A}:=\{0,1\}$ be the set of decisions available for each content, and let $\bold{A}_{t} :=(A_{1,t}, \cdots, A_{M,t})$ be the vector consisting of decisions for $M$ contents.
The cache capacity constraint implies that $\bold{A}_{t}$ must satisfy the following constraints,
\begin{align}\label{eq:capacity-constraint}
 \sum_{m=1}^M A_{m,t}\leq B, \quad\forall t.
\end{align}
We aim to design a policy $\pi:\cS \mapsto \cA^{M}$ maps the state $\bold{S}_t$ of the wireless edge to caching decisions $\bold{A}_t$, i.e., $\bold{A}_t=\pi(\bold{S}_t)$.

\textbf{Transition Kernel.} The state of the $m$-th request queue can change from $S_m$ to either $S_m+1$, or $S_m-1$ at the beginning of each decision epoch. Let $\bold{e}_{m}$ be the $M$-dimensional vector whose $m$-th component is $1$, and all others are $0$. Then,
\begin{align}\label{eq:queuetrans}
    \bold{S} =
    \begin{cases}
    \bold{S} + \bold{e}_m, \ &\text{with transition rate} ~b_m(S_m, A_m), \\
    \bold{S}-\bold{e}_m, &\text{with transition rate} \ d_m(S_m, A_m),
    \end{cases}
\end{align}
where $b_{m}(S_m, A_m):=\lambda_m$. We allow for {state-dependent} content delivery rates, which enables us to model realistic settings  \cite{larranaga2014index,larrnaaga2016dynamic}.  In particular, our setup can cover the classic $M/M/k$ queue if $d_m(S_m,A_m)=\nu_mS_mA_m$. This models the general multicast scenario in which the wireless edge can simultaneously serve end users whose requested contents are cached at the edge.

\textbf{Average Latency Minimization Problem.}
It follows from Little's Law~\cite{john1961little} that the objective of minimizing the average latency faced by end users is equivalent to that of minimizing the average number of cumulative outstanding requests in the system. Let $C_{m,t}(S_{m,t},A_{m,t}):= S_{m,t}$ be the instantaneous cost incurred by user $m$ at time $t$, so that the cumulative cost incurred in the system at time $t$ is given by
\begin{align}
    C_t(\bold{S}_t, \bold{A}_t)= \sum_{m=1}^M C_{m,t}(S_{m,t},A_{m,t})= \sum_{m=1}^M {S_{m,t}}.
\end{align}
With this choice of instantaneous cost, the average cost incurred in the system is proportional to the average latency faced by end users. Our objective is to derive a policy $\pi$ that makes content caching decisions at the capacity-constrained wireless edge for solving the following MDP:
\begin{align}\label{eq:cmdp-opt1}
    \min_{\pi\in\Pi} ~&C_\pi:=\limsup_{T\rightarrow\infty}\sum_{m=1}^M \frac{1}{T}\mathbb{E}_\pi \left[\int_0^T S_{m,t}dt\right],\nonumber \displaybreak[0]\\
    \quad&\text{s.t.}~\sum_{m=1}^M A_{m,t}\leq B, ~\forall t,
\end{align}
where the subscript denotes the fact that the expectation is taken with respect to the measure induced by the policy $\pi,$ and $\Pi$ is the set of all feasible wireless edge caching policies. Henceforth, we refer to~(\ref{eq:cmdp-opt1}) as the ``original MDP.'' Since it is an infinite-horizon average-cost problem, in principle it can be solved via the relative value iteration  \cite{puterman1994markov}. More specifically, there exists a value function, and an average cost value for the above MDP \cite[Theorem 8.4.3]{puterman1994markov}:
\begin{lemma}
Consider the MDP~\eqref{eq:cmdp-opt1} whose transition kernel is described in (\ref{eq:queuetrans}). There exists a $\beta^{*}$ and a function $V: \cS\mapsto \mathbb{R}$ that satisfy the following \emph{dynamic programming (DP) equation}:
\begin{align}\label{eq:dynamicpe}
   \beta^* &= \min_{\sum_m A_{m}\leq B}\Bigg( \sum_{m=1}^M \Big[S_m + \lambda_m V(\bold{S} +\bold{ e}_m)\nonumber\allowdisplaybreaks\\
  & +\nu_mS_mA_m V(\bold{S}\!-\! \bold{e}_m)\! -\!(\lambda_m\!+\!\nu_mS_mA_m) V(\bold{S})\Big]\Bigg).
\end{align}
\end{lemma}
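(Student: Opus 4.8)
The plan is to reduce the statement to a standard existence result for discrete-time average-cost MDPs, namely \cite[Theorem 8.4.3]{puterman1994markov}, by first converting the continuous-time controlled Markov chain defined by the rates in \eqref{eq:queuetrans} into an equivalent discrete-time MDP via uniformization. Observe first that both the state space $\cS$ and the per-content action set $\cA=\{0,1\}$ are finite: the truncation $S_{m,t}\le S_{max}$ makes $\cS\subseteq\{0,\ldots,S_{max}\}^M$, the joint action set is a finite subset of $\cA^M$ (those $\bold{A}$ with $\sum_m A_m\le B$), and the instantaneous cost $C(\bold{S})=\sum_m S_m$ is bounded by $MS_{max}$. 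Because the departure rate satisfies $\nu_m S_m A_m\le \nu_m S_{max}$, the total outgoing rate from any state-action pair is bounded above by $\Lambda:=\sum_{m=1}^M(\lambda_m+\nu_m S_{max})<\infty$, which is exactly the uniformization constant we need.

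Next I would carry out the uniformization explicitly. Fixing the rate $\Lambda$, I define a discrete-time transition kernel $P(\cdot\mid\bold{S},\bold{A})$ that places probability $\lambda_m/\Lambda$ on $\bold{S}+\bold{e}_m$, probability $\nu_m S_m A_m/\Lambda$ on $\bold{S}-\bold{e}_m$, and the remaining mass $1-\sum_m(\lambda_m+\nu_m S_m A_m)/\Lambda$ on a self-loop at $\bold{S}$, together with the scaled cost $C(\bold{S})/\Lambda$. This discrete-time chain has the same long-run behaviour as the original continuous-time process, so its optimal average cost per step $\tilde\beta$ relates to the continuous-time optimal average latency by $\beta^*=\Lambda\tilde\beta$. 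I would then show that multiplying the discrete-time average-cost optimality equation $\tilde\beta+V(\bold{S})=\min_{\bold{A}}\{C(\bold{S})/\Lambda+\sum_{\bold{S}'}P(\bold{S}'\mid\bold{S},\bold{A})V(\bold{S}')\}$ through by $\Lambda$ and cancelling the self-loop term $(\Lambda-\sum_m(\lambda_m+\nu_m S_m A_m))V(\bold{S})$ against $\Lambda V(\bold{S})$ reproduces exactly the generator form \eqref{eq:dynamicpe}; this bookkeeping is routine and need not be expanded here.

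The substantive hypothesis to verify before invoking \cite[Theorem 8.4.3]{puterman1994markov} is that the uniformized MDP is communicating (equivalently, that the optimal gain $\beta^*$ is constant across initial states and a bounded relative value function exists). I would establish this by a reachability argument: from any state $\bold{S}$ the arrival transitions are available under every action since $\lambda_m>0$, so every coordinate can be increased; conversely, by caching content $m$ (choosing $A_m=1$, which is feasible one content at a time under $\sum_m A_m\le B$) and using the departure transition whenever $S_m\ge1$, every coordinate can be driven to zero, possibly over several rounds. Hence the all-zero state communicates with every state and the MDP is communicating. This communicating property, together with the finiteness of $\cS$ and $\cA^M$ and the boundedness of the cost, furnishes all the conditions of \cite[Theorem 8.4.3]{puterman1994markov}, yielding a scalar $\tilde\beta$ and a relative value function $V$ solving the discrete-time optimality equation; rescaling by $\Lambda$ then gives the desired $\beta^*$ and $V$ satisfying \eqref{eq:dynamicpe}. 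I expect the main obstacle to be precisely this communicating verification, since the cache-capacity constraint $\sum_m A_m\le B$ prevents serving all contents simultaneously and forces the reachability argument to proceed through intermediate states rather than in a single step.
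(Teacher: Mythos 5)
Your proposal is correct and takes essentially the same route as the paper, whose entire ``proof'' is the citation of \cite[Theorem 8.4.3]{puterman1994markov}: your uniformization with constant $\Lambda=\sum_{m}(\lambda_m+\nu_m S_{max})$, the cancellation of the self-loop term $\bigl(\Lambda-\sum_m(\lambda_m+\nu_m S_m A_m)\bigr)V(\bold{S})$ against $\Lambda V(\bold{S})$, and the rescaling $\beta^*=\Lambda\tilde\beta$ are precisely the unstated bookkeeping behind that citation. One minor refinement: Theorem 8.4.3 is stated for unichain models rather than communicating ones, but your reachability argument in fact delivers unichain, since arrivals are uncontrolled ($\lambda_m>0$ under every action), so the all-$S_{max}$ state is reachable under every stationary policy and hence every recurrent class contains it.
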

Though one can obtain an optimal policy $\pi^*$ using relative value iteration, this approach suffers from the curse of dimensionality, i.e., the computational complexity grows linearly with the size of state space $\cS$, the latter quantity in turn grows exponentially with the number of contents $M$. This renders such a solution impractical. Furthermore, this approach fails to provide insight into the solution structure. Thus, our focus will be on developing computationally appealing solutions.

\subsection{Lagrangian Relaxation}
We now discuss Lagrangian relaxation of the original MDP~(\ref{eq:cmdp-opt1}), and introduce the corresponding ``per-content MDP.''  The Lagrangian multipliers together with these per-content problems form the building block of our Whittle index policy, that will be formally introduced in Section~\ref{sec:index}.

Following Whittle's approach \cite{whittle1988restless}, we first consider the following \textit{``relaxed problem,''} which relaxes the ``hard'' constraint in~(\ref{eq:cmdp-opt1}) to an average constraint:
\begin{align}\label{eq:mdp-relaxed}
    \min_{\pi\in\Pi} ~&\limsup_{T\rightarrow\infty}\sum_{m=1}^M \frac{1}{T}\mathbb{E}_\pi \left[\int_{0}^T S_{m,t}dt\right], \nonumber \displaybreak[0]\\
    \quad&\text{s.t.}~ \limsup_{T\rightarrow\infty}\sum_{m=1}^M\frac{1}{T}\mathbb{E}_\pi\left[\int_{0}^T A_{m,t}dt\right]\leq B.
\end{align}
Next, we consider the Lagrangian associated with~(\ref{eq:mdp-relaxed}), 
\begin{align}\label{eq:lagrangian}
 &L(\pi,W)\nonumber\\
 &:=\limsup_{T\rightarrow\infty}\frac{1}{T}\mathbb{E}_\pi\!\!\int_{0}^T\!\!\Bigg\{\!\sum_{m=1}^M S_{m,t}\!-\!W\Bigg(B\!-\! \sum_{m=1}^{M}A_{ m,t}\Bigg)\!\Bigg\},
\end{align}
where $W$ is the Lagrangian multiplier, and $\pi$ is a wireless edge caching policy. The corresponding dual function is defined as
\begin{align}
D(W):=\min_\pi L(\pi,W).\label{def:dual}
\end{align}
The dual problem corresponding to $W$ is to optimize the Lagrangian $L(\pi,W)$ over the choice of $\pi$.~For a fixed value of~$W$, the dual problem~\eqref{def:dual} corresponding to the relaxed problem~\eqref{eq:mdp-relaxed} decouples the original problem~\eqref{eq:cmdp-opt1} into $M$ ``\textit{per-content MDPs},'' each of them involving only a single content.  Specifically, the per-content MDP corresponding to the $m$-th content is given as follows,
\begin{align}\label{eq:decoupledduallag}
    \min_{\pi_m}{\bar{C}_{m}}:=\limsup_{T\rightarrow\infty}\frac{1}{T}\mathbb{E}_{\pi_m}\left[\int_{0}^T \bar{C}(S_{m,t}, A_{m,t})dt\right],
\end{align}
where $\bar{C}(S_{m,t}, A_{m,t}):= S_{m,t}-W(1 - A_{m,t})$ is the instantaneous cost incurred by $m$-th content, and $\pi_m$ is a policy that makes decisions (only) for the $m$-th content. It then follows that in order to evaluate the dual function~\eqref{def:dual} at $W$, it suffices to solve all $M$ independent per-content MDPs~(\ref{eq:decoupledduallag}) \cite{puterman1994markov}.
The relaxed problem~(\ref{eq:mdp-relaxed}) can be solved by solving each of these $M$ per-content MDPs, and then combining their solutions.

Unfortunately, this solution does not always provide a feasible wireless edge caching policy for the original problem~(\ref{eq:cmdp-opt1}), which requires that the cache capacity constraint~(\ref{eq:capacity-constraint}) must be met at all times, rather than just in the average sense as in the constraint~(\ref{eq:mdp-relaxed}). Whittle index policy, which we discuss next, combines these solutions corresponding to per-content MDPs in such a way that the resulting policy is also feasible for the original problem~(\ref{eq:cmdp-opt1}), i.e, it satisfies the hard constraint.

\section{Whittle Index Policy}\label{sec:index}
 
We now describe the Whittle index policy that will be utilized for making decisions for wireless edge caching. To the best of our knowledge, Whittle index policy has not been used previously to solve this problem. More specifically, the wireless edge caching problem (\ref{eq:cmdp-opt1}) can be posed as a RMAB problem in which each content $m\in\mathcal{M}$ can be viewed as an arm $m$, and playing arm $m$ would correspond to cache content $m$.  At each time $t$, the queue length $S_{m, t}$ of the corresponding request queue is the state of arm $m$, and $A_{m,t}$ is the action taken for content $m$. $A_{m,t}=1$ denotes that content $m$ is cached at time $t$, and $A_{m,t}=0$ otherwise. It is well-known that Whittle index policy is a computationally tractable solution to the RMAB problem since its computational complexity scales linearly with the number of arms $M$.  For ease of readability, we relegate all proofs in this section to Appendix~\ref{Sec_Appendix_Prop}.

\subsection{Indexability and Whittle Index}
Whittle index policy is defined for a RMAB only when the underlying problem is ``indexable''~\cite{whittle1988restless}. Thus, we begin by showing that our MDP is \textit{indexable}. Loosely speaking, to show that the problem is indexable, we need to consider the single-arm (per-content) MDP~(\ref{eq:decoupledduallag}) and then need to show that the set of states in which the optimal action is passive (i.e., not to cache) increases as the Lagrangian multiplier $W$ increases. We definite it formally here for completeness.

\begin{definition} (Indexability) 
Consider the per-content MDP~(\ref{eq:decoupledduallag}) for the $m$-th content. Let $D_m(W)$ be the set of states in which the optimal action for the per-content MDP~(\ref{eq:decoupledduallag}) is to choose the passive action, i.e., $A_m=0$. Then the $m$-th MDP is said to be indexable if $D_m(W)$ increases with $W$, i.e., if $W>W^\prime$, then $D_m(W)\supseteq D_m(W^{\prime})$. The original MDP~\eqref{eq:cmdp-opt1} is indexable if all of the $M$ per-content MDPs~(\ref{eq:decoupledduallag}) are indexable.
\end{definition} 
In case that a MDP is indexable, the Whittle index for each content/arm is defined as follows. 

\begin{definition} (Whittle Index) 
If the per-content MDP~(\ref{eq:decoupledduallag}) for the $m$-th content is indexable, the Whittle index in state $S$ is the smallest value of the Lagrangian multiplier $W$ such that the optimal policy is indifferent towards actions $A_m=0$ and $A_m=1$ when the Lagrange multiplier is set equal to this value.  We denote this Whittle index by $W_m(S)$, satisfying $W_m(S):=\inf_{W\geq 0}\{S\in D_m(W)\}$.
\end{definition}

\subsection{{The Per-content MDP~(\ref{eq:decoupledduallag})} is Indexable}\label{sec:indexable}
Our proof of indexability relies on the ``threshold'' property of the optimal policy for each per-content MDP~(\ref{eq:decoupledduallag}), i.e., for each $\forall m\in\mathcal{M}$, it is optimal to cache this content only when the number of outstanding requests for it is above a certain threshold; this threshold might depend upon $m$. To show this property, we analyze the per-content MDPs~(\ref{eq:decoupledduallag}).~We begin by analyzing this MDP for a fixed $m$, and thus drop the subscript $m$ in the rest of this subsection for ease of exposition.

\subsubsection{Threshold property of an optimal policy}
We start by analyzing an associated discounted cost MDP, rather than the average latency problem.  After analyzing the discounted MDP, we extend our results to the case of average latency problem~(\ref{eq:decoupledduallag}).~The discounted latency problem corresponding to \eqref{eq:decoupledduallag} is given as follows,
\begin{align}\label{eq:discount_value}
    \min_\pi \mathbb{E}_{\pi}\left[\lim_{T\rightarrow \infty}\int_{0}^T\alpha^{t-1}\bar{C}(S_t, A_t)dt|S_0=s \right],
\end{align}
where $\alpha\in(0,1)$ is a discount factor.  It is well-known that there exists an optimal stationary deterministic policy for this discounted latency problem \cite{puterman1994markov}, and hence we will restrict ourselves to the class of stationary deterministic policies while solving this problem. We apply the value iteration method to find the optimal policy.   

Let $U$ denote the Banach space of bounded real-value functions on $\mathbb{N}$ with supremum norm. Define the operator $\mathcal{T}:U\rightarrow U$ as follows,
\begin{align}\label{eq:operator}
    (\mathcal{T}u)(s):=\min_{a\in\{0,1\}} \bar{C}(s,a)+\alpha\mathbb{E} [u(s^\prime)],
\end{align}
where $u(\cdot)\in U$ and the expectation is taken with respect to the distribution of state $s^\prime$ which results when action $a$ is applied in state $s$. Let $J^\alpha(s)$ denote the optimal expected total discounted cost incurred by the system when it starts in state $s$. Then we have that $J^\alpha(s)=\mathcal{T}J^\alpha(s)$, i.e., $J^\alpha(s)$ is a solution of the Bellman equation satisfying 
\begin{align}\label{eq:bellman1}
    J^\alpha(s)=\min_{a\in\{0,1\}} \bar{C}(s,a)+\alpha\mathbb{E}[J^\alpha(s^\prime)].
\end{align}
As is described in~(\ref{eq:queuetrans}), $s^\prime$ can only assume values from the set $\{s-1,s+1\}$. Let $P_s:=\frac{\lambda}{\lambda+\nu sa}$, so that~(\ref{eq:bellman1}) can be written compactly as $J^\alpha(s)=$ 
\begin{align}\label{eq:bellman2}
 \min_{a\in\{0,1\}}\!\!\bar{C}(s,a)\!+\!\alpha\Big(P_sJ^\alpha(s\!+\!1)\!+\!(1-P_s)J^\alpha(s\!-\!1)\Big).
\end{align}
Define the state-action value function $\forall s\in\cS, a\in \{0,1\}$ as: 
\begin{align}
  Q^\alpha(s,a)&:= \bar{C}(s,a)\nonumber\allowdisplaybreaks\\
  &+\alpha\Big( P_sJ^\alpha(s+1)+(1-P_s)J^\alpha(s-1)\Big).  
\end{align}
Therefore, we have $J^\alpha(s)=\min_{a\in\{0,1\}}Q^\alpha(s,a).$

We need the following assumption on the underlying MDPs in order to ensure that the Whittle indices $W_m(s)$ are finite.
\begin{assumption}\label{assumption1}
There exists a finite $W>0$ such that there is at least one state $s$ in which the optimal action is to activate the arm (cache the content)\footnote{This assumption is valid and such a state always exists.  Otherwise, the optimal action is $a=0$ for any state $s$. From~(\ref{eq:bellman2}), we have $J^\alpha(s)=s-W+\alpha J^\alpha(s+1), \forall s$.  It is straightforward to show that such a recursion results in a non-decreasing $J^\alpha(s)$ in $s$. Hence, there exists a lowest value of state $s$ satisfying $Q^\alpha(s,1)\leq Q^\alpha(s,0)$ for any finite $W.$}, i.e. $Q^\alpha(s,1)\leq Q^\alpha(s,0)$. 
\end{assumption}

We now show that for each value of $W$, the optimal policy for the per-content MDP~\eqref{eq:discount_value} is of threshold-type.

\begin{proposition}\label{prop:threshold-policy}
Consider the discounted latency MDP~\eqref{eq:discount_value} with a fixed $W\ge 0$. There exists an optimal policy for~\eqref{eq:discount_value} that is of threshold-type with the threshold depending on $W$.
\end{proposition}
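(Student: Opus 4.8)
The plan is to read off the optimal action in each state from the sign of the \emph{advantage} of caching over not caching, and then to show that this sign can flip only once, from passive to active, as $s$ grows. Using the Bellman equation~\eqref{eq:bellman2} together with $\bar{C}(s,0)=s-W$, $\bar{C}(s,1)=s$, and the fact that $P_s=1$ when $a=0$ while $P_s=\lambda/(\lambda+\nu s)$ when $a=1$, a direct computation collapses the difference of the two state-action values to a single term:
\begin{align}
Q^\alpha(s,1)-Q^\alpha(s,0)=W-\alpha\,\frac{\nu s}{\lambda+\nu s}\,\big[J^\alpha(s+1)-J^\alpha(s-1)\big].
\end{align}
Hence caching is (weakly) optimal in state $s$ exactly when $W\le\phi(s)$, where $\phi(s):=\alpha\frac{\nu s}{\lambda+\nu s}[J^\alpha(s+1)-J^\alpha(s-1)]$. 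Consequently, if $\phi$ is non-decreasing in $s$, the set $\{s:W\le\phi(s)\}$ has the form $\{s\ge s^\ast(W)\}$, which is precisely a threshold policy, and Assumption~\ref{assumption1} guarantees this set is nonempty so that $s^\ast(W)$ is finite. The entire argument therefore reduces to proving that $\phi$ is non-decreasing.

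To establish monotonicity of $\phi$, I would isolate two structural properties of the value function: that $J^\alpha$ is (i) non-decreasing and (ii) convex on $\cS$. The prefactor $\nu s/(\lambda+\nu s)$ is non-negative and non-decreasing in $s$; property~(i) gives $J^\alpha(s+1)-J^\alpha(s-1)\ge 0$, while property~(ii)—convexity being equivalent to non-decreasing first differences—makes $J^\alpha(s+1)-J^\alpha(s-1)=[J^\alpha(s+1)-J^\alpha(s)]+[J^\alpha(s)-J^\alpha(s-1)]$ non-decreasing. Since a product of two non-negative non-decreasing functions is non-decreasing, $\phi$ is non-decreasing, as required. Note also that at the boundary $s=0$ the weight $\nu s/(\lambda+\nu s)$ vanishes, so $\phi(0)=0<W$ and the passive action is automatically optimal there, which is consistent with a threshold at some $s^\ast(W)\ge 1$.

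It remains to prove (i) and (ii), for which I would use value iteration: set $J_0\equiv 0$, $J_{n+1}=\mathcal{T}J_n$ with $\mathcal{T}$ as in~\eqref{eq:operator}, recall that $\mathcal{T}$ is an $\alpha$-contraction so that $J_n\to J^\alpha$ uniformly, and argue that the class of non-decreasing convex functions on $\cS$ is closed under $\mathcal{T}$ and under pointwise limits. The induction step requires showing that if $J_n$ is non-decreasing and convex, then so is $J_{n+1}$, where the two candidate terms being minimized are $s-W+\alpha J_n(s+1)$ for $a=0$ and $s+\alpha\big(\frac{\lambda}{\lambda+\nu s}J_n(s+1)+\frac{\nu s}{\lambda+\nu s}J_n(s-1)\big)$ for $a=1$. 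Monotonicity of each branch, and hence of their minimum, follows readily from monotonicity of $J_n$.

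The main obstacle is the convexity half of the induction step. First, the active branch carries the \emph{state-dependent} weights $\lambda/(\lambda+\nu s)$ and $\nu s/(\lambda+\nu s)$, so convexity of $J_n$ does not transfer termwise to a weighted average; the second difference of this branch must be expanded directly and shown non-negative using monotonicity and convexity of $J_n$ simultaneously. Second, and more delicate, the value function is the \emph{minimum} of the two branches, and the pointwise minimum of convex functions is not convex in general, so a naive ``min of convex is convex'' step fails. I would instead prove $J_{n+1}(s+1)+J_{n+1}(s-1)-2J_{n+1}(s)\ge 0$ by a region-based argument that exploits the single-crossing (threshold) interleaving of the two branches at level $n$, treating the purely passive region, the purely active region, and the switching state separately, and handling the edge states $s=0$ and $s=S_{max}$ on their own. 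I expect this convexity-preservation step to be the crux of the proof; once it is in hand, passing to the limit and invoking the advantage computation above yields the threshold structure immediately.
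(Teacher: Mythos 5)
Your advantage computation is correct: with $P_s=\lambda/(\lambda+\nu s a)$ one indeed gets $Q^\alpha(s,1)-Q^\alpha(s,0)=W-\alpha\frac{\nu s}{\lambda+\nu s}\bigl[J^\alpha(s+1)-J^\alpha(s-1)\bigr]$, and your reduction is sound: threshold structure follows if $\phi(s):=\alpha\frac{\nu s}{\lambda+\nu s}\bigl[J^\alpha(s+1)-J^\alpha(s-1)\bigr]$ is non-decreasing, with Assumption~\ref{assumption1} guaranteeing the threshold is finite. But your argument stops exactly where it becomes hard: you never establish convexity of $J^\alpha$ (or any substitute monotonicity of $s\mapsto J^\alpha(s+1)-J^\alpha(s-1)$); you only enumerate the obstacles --- the state-dependent weights $\frac{\lambda}{\lambda+\nu s},\frac{\nu s}{\lambda+\nu s}$ in the active branch, and the fact that a pointwise minimum of convex functions need not be convex --- and defer them to a ``region-based argument'' you do not carry out. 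This is a genuine gap, not a routine detail: the remark following Proposition~\ref{prop:threshold-policy} in the paper says explicitly that the convexity/monotonicity route of prior work (e.g., \cite{singh2021user,hsu2019scheduling}) does not transfer here precisely because the transition rates depend on the state, and this is why the paper does not attempt it. As it stands, your proposal is a correct reduction plus an unproven key lemma, i.e., a program rather than a proof.

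For contrast, the paper's proof avoids any global shape property of $J^\alpha$ and argues locally from the indifference state. Let $R$ be the smallest state with $Q^\alpha(R,1)=Q^\alpha(R,0)$ (Assumption~\ref{assumption1}). Passivity below $R$ gives $J^\alpha(R-1)=R-1-W+\alpha J^\alpha(R)$, and indifference at $R$ gives $W=\alpha(1-P_R)\bigl(J^\alpha(R+1)-J^\alpha(R-1)\bigr)$; combining these yields a closed-form expression for $J^\alpha(R+1)$ in terms of $W$, $R$, $\alpha$, $P_R$. Supposing the passive action were optimal at $R+1$ then forces $W\geq \frac{1-P_{R+1}}{\alpha(1-P_R)}W$, a contradiction because $\frac{1-P_{R+1}}{\alpha(1-P_R)}>1$ --- which uses exactly the monotonicity of $s\mapsto 1-P_s=\frac{\nu s}{\lambda+\nu s}$ that appears as the prefactor in your $\phi$, together with $\alpha<1$. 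The paper then verifies directly that $a=1$ is optimal at $R+1$ and propagates the same one-step computation to all $s\geq R+1$. In short: the paper buys the threshold property from explicit one-step identities around the switching state, at the cost of an induction along states; your route would buy it from a single monotonicity statement about $\phi$, but the convexity-preservation induction needed to support it is precisely the step the paper's authors flag as unavailable for state-dependent rates, so to rescue your approach you would need a new argument for that lemma (or prove monotonicity of the weighted difference directly), not merely the case analysis you sketch.
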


\begin{remark}
Existing works \cite{singh2021user, hsu2019scheduling} among others have also used the threshold structure of an optimal policy in order to show that the underlying MDP is indexable. The key is to show that if the optimal action for state $s$ is to keep the arm active ($a=1$), i.e., $Q^\alpha(s,1)\leq Q^\alpha(s,0)$,~then the optimal action for state $s+1$ is also to keep it active, i.e., $Q^\alpha(s+1,1)\leq Q^\alpha(s+1,0)$.  The threshold structure in turn is shown by considering the corresponding discounted MDP, and proving for this discounted problem that its value function $Q^\alpha(\cdot)$ of the underlying MDP is convex \cite{singh2021user}, or monotone  \cite{hsu2019scheduling}. Works such as \cite{singh2021user, hsu2019scheduling} showed that these properties hold, but then the associated MDPs in these works have transition rates that are not a function of state. In contrast, the transition rates in our MDPs are a function of state, and hence we cannot use existing results directly.
\end{remark}

The following proposition extends our results in Proposition~\ref{prop:threshold-policy} for the discounted latency problem in \eqref{eq:discount_value} to the original average latency problem in \eqref{eq:decoupledduallag}.

\begin{proposition}\label{Prop:2}
There exists an optimal stationary policy of the threshold-type for the average latency problem in \eqref{eq:decoupledduallag}.
\end{proposition}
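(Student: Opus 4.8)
The plan is to use the standard vanishing-discount argument, leveraging Proposition~\ref{prop:threshold-policy} to transfer the threshold structure from the discounted problem to the average-cost problem. The crucial simplification is that the per-content state space is finite: since $S_t\le S_{max}$, the single-arm MDP lives on $\{0,1,\dots,S_{max}\}$, and under Assumption~\ref{assumption1} the induced chain is communicating (caching above the threshold creates departures at high states, while Assumption~\ref{assumption1} rules out degenerate always-passive behavior). This places us squarely in the regime where the average-cost optimality equation~\eqref{eq:dynamicpe} can be recovered as the limit of the discounted Bellman equations~\eqref{eq:bellman2} as $\alpha\uparrow 1$.

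First I would fix a reference state, say $s=0$, and define the relative value function $h_\alpha(s):=J^\alpha(s)-J^\alpha(0)$. Subtracting $J^\alpha(0)$ from both sides of~\eqref{eq:bellman2} gives a Bellman-type equation in $h_\alpha$ in which the scalar $(1-\alpha)J^\alpha(0)$ plays the role of a candidate average cost. The key estimate is that $\{h_\alpha(s)\}_{\alpha\in(0,1)}$ is uniformly bounded for every $s$; on a finite communicating chain this follows from the comparison $|h_\alpha(s)|\le \max_{s}\,\mathbb{E}[\text{cost to reach } 0 \text{ from } s]$, which is finite and independent of $\alpha$, and $(1-\alpha)J^\alpha(0)$ likewise stays bounded. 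Then by Bolzano--Weierstrass I would extract a sequence $\alpha_k\uparrow 1$ along which $h_{\alpha_k}(s)\to h(s)$ for all $s$ and $(1-\alpha_k)J^{\alpha_k}(0)\to\beta$. Passing to the limit in the $h_{\alpha_k}$-equation (a finite minimization, so the limit commutes with the $\min$) yields the average-cost optimality equation for the per-content MDP~\eqref{eq:decoupledduallag}, certifying that $\beta$ is the optimal average cost and $h$ the associated relative value function.

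The decisive step is preserving the threshold structure in this limit, and here I would use a pigeonhole argument rather than a delicate continuity argument. By Proposition~\ref{prop:threshold-policy} each discounted problem admits an optimal policy that is threshold-type, and on the finite state space every threshold is an integer in the finite set $\{0,1,\dots,S_{max}+1\}$. Hence along $\{\alpha_k\}$ some single threshold value $n^*$ recurs infinitely often; restricting to that subsequence, the stationary threshold policy $\pi_{n^*}$ is simultaneously discounted-optimal for all those $\alpha_k$. Taking the limit of its discounted optimality inequalities shows that $\pi_{n^*}$ attains the minimum in the average-cost optimality equation, so it is average-cost optimal, and it is threshold-type by construction.

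The main obstacle I anticipate is the uniform boundedness of the relative value functions $h_\alpha$: while intuitively clear from finiteness and communication of the chain, making it rigorous requires a uniform (in $\alpha$) bound on expected hitting costs, which is exactly where Assumption~\ref{assumption1} is needed to prevent costs from growing without bound under a degenerate always-passive policy. Once that bound is in hand, the remaining steps---extracting the convergent subsequence, interchanging the limit with a finite $\min$, and the pigeonhole on thresholds---are routine.
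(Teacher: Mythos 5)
Your proposal is correct and follows essentially the same route as the paper: the paper's proof is a terse vanishing-discount argument that cites Lippman (1973) for $\lim_{\alpha\to 1}(1-\alpha)J^\alpha_{\pi^*_\alpha}(s)=J_{\pi^*}(s)$ and invokes finiteness of the action (and policy) space to get $\pi^*_\alpha\to\pi^*$, which is exactly what your relative-value-function bound, Bolzano--Weierstrass extraction, and pigeonhole-on-thresholds make explicit. The only nuance is that communication of the per-content chain comes from action $1$ being available at every state (not from Assumption~\ref{assumption1}, whose role is merely to keep the Whittle index finite), but this does not affect the validity of your argument.
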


\subsubsection{Indexability of the per-content MDP~\eqref{eq:decoupledduallag}}
We now show that the per-content MDP~\eqref{eq:decoupledduallag} is indexable. 
\begin{proposition}\label{prop:indexable}
The per-content MDP~\eqref{eq:decoupledduallag} is indexable. 
\end{proposition}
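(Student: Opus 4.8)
The plan is to leverage the threshold structure already established in Proposition~\ref{prop:threshold-policy} and Proposition~\ref{Prop:2} in order to reduce indexability to a single monotonicity statement, and then to verify that statement through the value function. Recall that the per-content MDP is indexable precisely when the passive set $D_m(W)$ grows with $W$. Since the optimal policy is of threshold type, $D_m(W)$ is a down-set of the form $\{s : s < \sigma(W)\}$, where $\sigma(W)$ is the activation threshold. Hence indexability is equivalent to showing that $\sigma(W)$ is non-decreasing in $W$, or equivalently that for each fixed state $s$, once the passive action becomes optimal at some $W'$ it remains optimal for every $W > W'$.

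I would work first with the discounted per-content MDP~\eqref{eq:discount_value}, where the state-action value function $Q^\alpha$ is available, and transfer to the average-cost problem~\eqref{eq:decoupledduallag} at the end. For a fixed $W$, the passive action is optimal at $s$ iff $\Delta_W(s) := Q^\alpha(s,0) - Q^\alpha(s,1) \le 0$. Using that the passive action suspends service, so that the queue increases with probability one (i.e. $P_s = 1$ when $a=0$), a direct substitution gives $\Delta_W(s) = -W + \alpha(1-P_s)\big(J^\alpha(s+1) - J^\alpha(s-1)\big)$ with $P_s = \lambda/(\lambda + \nu s)$, where I make the dependence of $J^\alpha$ on $W$ explicit by writing $J^\alpha_W$. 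It therefore suffices to prove that $\Delta_W(s)$ is non-increasing in $W$ for every $s$.

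The core of the argument is a set of structural properties of $J^\alpha_W$, all obtained by induction along the value-iteration recursion $J^\alpha_W = \mathcal{T}J^\alpha_W$ with the operator~\eqref{eq:operator}: namely that $J^\alpha_W(s)$ is non-decreasing in $s$, and that it is non-increasing, concave, and Lipschitz in $W$, with sensitivity $g_W(s) := -\partial_W J^\alpha_W(s) \in [0, 1/(1-\alpha)]$ equal to the expected discounted amount of time the arm is kept passive when started in state $s$ under the $W$-optimal policy. Combining these, the desired monotonicity of $\Delta_W(s)$ reduces to the pointwise inequality $\alpha(1-P_s)\big(g_W(s-1) - g_W(s+1)\big) \le 1$. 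I would establish this not through a crude worst-case slope bound (which fails for $\alpha$ close to one), but by exploiting the self-consistent recursions that $g_W$ satisfies across the passive region, where $g_W(s) = 1 + \alpha g_W(s+1)$, and the active region, which tie together the values of $g_W$ at adjacent states.

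The hard part will be precisely this last step. Because the content-delivery rate $\nu s$ is state-dependent, the coefficient $\alpha(1-P_s)$ multiplying the value-function differences varies with $s$, so one cannot simply bound $g_W(s-1) - g_W(s+1)$ by the uniform slope range $1/(1-\alpha)$; this state-dependence of the transition kernel is exactly the feature that, as noted in the paper, makes establishing indexability delicate here. Once the discounted problem is shown to be indexable for every $\alpha \in (0,1)$, I would conclude indexability of the average-cost per-content MDP~\eqref{eq:decoupledduallag} by a standard vanishing-discount argument ($\alpha \to 1$), invoking the average-cost threshold structure of Proposition~\ref{Prop:2} to guarantee that the limiting passive sets retain their down-set form, with additional care taken at points of indifference and at the boundary state $S_{max}$.
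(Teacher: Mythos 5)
Your reduction is sound as far as it goes: given the threshold structure from Propositions~\ref{prop:threshold-policy} and~\ref{Prop:2}, indexability is indeed equivalent to $\Delta_W(s) := Q^\alpha(s,0)-Q^\alpha(s,1)$ being non-increasing in $W$ for every $s$, and your identity $\Delta_W(s) = -W + \alpha(1-P_s)\bigl(J^\alpha_W(s+1)-J^\alpha_W(s-1)\bigr)$ is correct. But the proposal has a genuine gap exactly where you flag it: the pointwise inequality $\alpha(1-P_s)\bigl(g_W(s-1)-g_W(s+1)\bigr)\le 1$, which is the \emph{entire} content of indexability in your formulation, is never established --- you only name a strategy (``exploiting the self-consistent recursions''), and it is not evident that it closes. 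In the active region the recursion for $g_W$ couples adjacent states through the state-dependent weights $P_s$, and precisely because these weights vary with $s$ (the difficulty you yourself identify) there is no obvious cancellation capping the difference $g_W(s-1)-g_W(s+1)$ by $1/(\alpha(1-P_s))$ uniformly as $\alpha\to 1$, where $g_W$ can itself be of order $1/(1-\alpha)$. A secondary, fixable gap: $J^\alpha_W$ is concave in $W$ but not necessarily differentiable, so $g_W$ must be defined via one-sided derivatives or an envelope argument, with care at indifference points; and the concluding vanishing-discount transfer of indexability (a statement about how passive sets vary with $W$, not a fixed-$W$ cost limit) needs its own justification beyond citing Proposition~\ref{Prop:2}.

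The paper sidesteps all of this by arguing directly in the average-cost setting: under a threshold policy $R$, the average cost of~\eqref{eq:decoupledduallag} is the \emph{linear} function $h^R(W)=\sum_s s\phi_R(s)-W\sum_{s=0}^{R}\phi_R(s)$ of $W$, where $\phi_R$ is the stationary distribution, so by Proposition~\ref{Prop:2} the optimal cost $h(W)=\min_R h^R(W)$ is a lower envelope of lines whose slopes $-\sum_{s=0}^{R}\phi_R(s)$ steepen with $R$; the minimizing threshold is therefore non-decreasing in $W$, which is exactly $D(W')\subseteq D(W)$ for $W'<W$. If you want to salvage your route, the cleanest repair is to transplant this envelope idea into your framework: for any fixed threshold policy the cost is affine in $W$ with slope equal to minus the passive-time occupancy, so concavity of the value in $W$ plus monotonicity of these slopes in $R$ yields monotonicity of the optimal threshold without ever bounding $g_W(s-1)-g_W(s+1)$ pointwise --- and the argument then runs directly on the average-cost problem, making the vanishing-discount step unnecessary.
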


\begin{proposition}\label{prop:whittle-index-closed}
{Let $\{\phi_R(s)\}_{s=0}^{S_{max}}$ be the stationary distribution of the Markov process which results when the threshold policy with threshold value $R$ is applied.} If the function $\frac{\sum_s s\phi_R(s)- \sum_s s\phi_{R-1}(s) }{\sum_{s=0}^{R} {\phi_{R}(s)}-\sum_{s=0}^{R-1} \phi_{R-1}(s) }$ is non-decreasing in $R$, then the Whittle indices of the per-content MDP~\eqref{eq:decoupledduallag}  are given as follows,
\begin{align}\label{eq:whittle-index1}
W(R):=\frac{\sum_s s\phi_R(s)- \sum_s s\phi_{R-1}(s) }{\sum_{s=0}^{R} {\phi_{R}(s)}-\sum_{s=0}^{R-1} \phi_{R-1}(s) }.
\end{align}
\end{proposition}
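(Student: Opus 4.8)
The plan is to exploit the threshold structure proved in \cref{prop:threshold-policy,Prop:2}, which lets me restrict attention to stationary threshold policies parametrized by the threshold $R$, and then to recover the closed form by enforcing the indifference condition that \emph{defines} the Whittle index. Concretely, I would (i) compute the stationary distribution $\phi_R$ of the Markov chain induced by the threshold-$R$ policy, (ii) write the long-run average cost of the Lagrangian per-content MDP under this policy as an affine function of $W$, and (iii) equate the average costs of the two threshold policies that differ only at the boundary state, which yields $W(R)$.

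First, for a fixed threshold $R$ the induced process on $\{0,\dots,S_{max}\}$ is a birth--death chain with arrival (birth) rate $\lambda$ in every state and service (death) rate $d(s,1)$ in the active states and $d(s,0)=0$ in the passive states (for instance $d(s,a)=\nu s a$). Since a birth--death chain is reversible, I would solve the detailed-balance recursion to obtain $\{\phi_R(s)\}_{s=0}^{S_{max}}$ explicitly; this is a routine calculation. Because $\phi_R$ gives the fraction of time spent in each state and the instantaneous cost rate is $\bar C(s,a)=s-W(1-a)$, the time-average cost under threshold $R$ is
\begin{align}
\bar C_R(W)=\sum_s s\,\phi_R(s)-W\sum_{s=0}^{R}\phi_R(s),\nonumber
\end{align}
where the second sum is the stationary probability of the passive set $\{0,\dots,R\}$ under this policy.

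The heart of the proof is the indifference characterization of the index. By the definition of the Whittle index together with the indexability established in \cref{prop:indexable}, $W(R)$ is the multiplier value at which state $R$ just switches from active to passive, i.e.\ at which thresholds $R-1$ and $R$ are simultaneously optimal for the per-content MDP. Imposing $\bar C_R(W)=\bar C_{R-1}(W)$ gives a single linear equation in $W$, and solving it produces exactly
\begin{align}
W(R)=\frac{\sum_s s\,\phi_R(s)-\sum_s s\,\phi_{R-1}(s)}{\sum_{s=0}^{R}\phi_R(s)-\sum_{s=0}^{R-1}\phi_{R-1}(s)},\nonumber
\end{align}
which is the claimed expression. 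The stated hypothesis that this ratio is non-decreasing in $R$ then certifies consistency with the definition: it forces $\bar C_R(W)-\bar C_{R-1}(W)$ to change sign only once, so that the optimal threshold is non-decreasing in $W$ and the passive set grows monotonically (as required by \cref{prop:indexable}), making $W(R)$ the \emph{smallest} multiplier that places state $R$ in the passive set.

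I expect the main obstacle to be step (iii): rigorously justifying that the per-state indifference in the definition (equality of the two actions in the average-cost optimality equation at state $R$) is equivalent to the global equality $\bar C_R(W)=\bar C_{R-1}(W)$ of the two adjacent threshold policies. This equivalence hinges on the threshold structure, so that flipping the action only at state $R$ carries one optimal policy to the other, and on the average-cost (Poisson) optimality equation. Some care is also needed because service is switched off in the passive states, so the states strictly below the boundary carry zero stationary mass; the two policies being compared must therefore be defined and normalized consistently before their average costs are subtracted. Once this equivalence is in hand, the remaining work is the algebra already described.
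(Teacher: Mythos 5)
Your proposal is correct and follows essentially the same route as the paper's proof: the paper likewise writes the average cost of a threshold policy as $\mathbb{E}_R[s]-W\,\mathbb{E}_R[\mathds{1}_{\{A(s)=0\}}]$ with $\mathbb{E}_R[\mathds{1}_{\{A(s)=0\}}]=\sum_{s=0}^{R}\phi_R(s)$, imposes equality of performance between the two adjacent threshold policies at $W=W(R)$ (citing prior work for this indifference characterization), and solves the resulting linear equation to obtain the ratio in~\eqref{eq:whittle-index1}. Your write-up is in fact more explicit than the paper's, which delegates the adjacent-threshold indifference step you flag as the main obstacle to the cited references rather than proving it.
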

From~(\ref{eq:whittle-index1}), it is clear that the stationary distribution of the threshold policy is required to compute the Whittle indices. 

\begin{proposition}\label{prop:stationary-distributions}
The stationary distribution $\{\phi_{R}(s)\}_{s=0}^{S_{max}}$ of the threshold policy with threshold value $R$ satisfies
\begin{align}\label{eq:steadystatedist}\nonumber
    \phi_R (s) &= 0,~s =0,1,\ldots,R-1,\\
     \qquad \phi_R(R)&=\frac{\nu (R+1)}{\lambda+\nu (R+1)}\cdot\phi_R(R+1), \nonumber\\
\phi_R(R+1)&=1/\Bigg(1+\frac{\nu (R+1)}{\lambda+\nu (R+1)}\nonumber\allowdisplaybreaks\\
&+\sum\limits_{l=2}^{S_{max}-R}\prod_{j=2}^l \frac{\lambda}{\lambda+\nu (R+j-1)}\frac{\lambda+\nu (R+j)}{\nu (R+j)}\Bigg),\nonumber\displaybreak[1]\\ 
\phi_R(R+l)&=\phi_{R}(R+1)\prod_{j=2}^l \frac{\lambda}{\lambda+\nu (R+j-1)}\nonumber\allowdisplaybreaks\\
&\cdot\frac{\lambda+\nu (R+j)}{\nu (R+j)}, ~l=2,\cdots, S_{max}-R.
\end{align}
\end{proposition}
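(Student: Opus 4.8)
The plan is to recognize $\{\phi_R(s)\}$ as the stationary distribution of the \emph{embedded jump chain} sampled at the decision epochs under the threshold policy of Proposition~\ref{Prop:2}, and then to exploit the fact that this jump chain is a birth--death (nearest-neighbor) chain, hence reversible, so that its stationary distribution has product form. Recall from~\eqref{eq:queuetrans} that from state $s$ under action $a$ the next transition is an arrival (to $s+1$) or a departure (to $s-1$); by the definition $P_s=\frac{\lambda}{\lambda+\nu s a}$ used in~\eqref{eq:bellman2}, the jump chain moves up with probability $P_s$ and down with probability $1-P_s=\frac{\nu s a}{\lambda+\nu s a}$. Under the threshold-$R$ policy we have $a=0$ for $s\le R$ and $a=1$ for $s\ge R+1$; thus $P_s=1$ (a deterministic upward move) for every passive state $s\le R$, while $P_s=\frac{\lambda}{\lambda+\nu s}$ for every active state $s\ge R+1$.

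First I would pin down the support. Because every passive state transitions upward with probability one, and because the lowest active state $R+1$ can descend only to $R$, the chain restricted to $\{R,R+1,\dots,S_{max}\}$ never drops below $R$; consequently the states $\{0,1,\dots,R-1\}$ are transient and are not reached from the recurrent class. This yields $\phi_R(s)=0$ for $s=0,1,\dots,R-1$, the first line of~\eqref{eq:steadystatedist}, and confines the remaining analysis to the birth--death chain on $\{R,\dots,S_{max}\}$.

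Next I would write the detailed-balance (local-balance) equations $\phi_R(s)P_s=\phi_R(s+1)\,(1-P_{s+1})$ for $s=R,\dots,S_{max}-1$, which determine a reversible birth--death chain on a path up to normalization. For the boundary pair $s=R$ I would substitute $P_R=1$ and $1-P_{R+1}=\frac{\nu(R+1)}{\lambda+\nu(R+1)}$ to obtain $\phi_R(R)=\frac{\nu(R+1)}{\lambda+\nu(R+1)}\phi_R(R+1)$, the second line of~\eqref{eq:steadystatedist}. For $s\ge R+1$ both endpoints are active, so $P_s=\frac{\lambda}{\lambda+\nu s}$ and $1-P_{s+1}=\frac{\nu(s+1)}{\lambda+\nu(s+1)}$, giving the one-step ratio
\begin{align*}
\frac{\phi_R(s+1)}{\phi_R(s)}=\frac{\lambda}{\lambda+\nu s}\cdot\frac{\lambda+\nu(s+1)}{\nu(s+1)}.
\end{align*}
Telescoping this ratio from $R+1$ up to $R+l$ (re-indexing $s=R+j-1$) reproduces the product expression for $\phi_R(R+l)$ in the last line of~\eqref{eq:steadystatedist}.

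Finally I would fix the free constant $\phi_R(R+1)$ by imposing $\sum_{s=R}^{S_{max}}\phi_R(s)=1$; factoring out $\phi_R(R+1)$ and inserting the expressions just derived for $\phi_R(R)$ and $\phi_R(R+l)$ gives exactly the normalizing sum in the third line of~\eqref{eq:steadystatedist}, so $\phi_R(R+1)=1/Z$. The only genuinely subtle point --- and the one I would emphasize --- is that the correct object is the jump chain rather than the continuous-time chain: the extra factors $(\lambda+\nu s)$ appearing in~\eqref{eq:steadystatedist} are precisely the total exit rates that convert the continuous-time stationary law into the embedded one, and conflating the two would produce the simpler but incorrect $M/M/\infty$-type ratio $\frac{\lambda}{\nu(s+1)}$. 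The reflecting boundary at $S_{max}$ (no upward move available) requires no special treatment, since the local-balance equations are only needed for $s\le S_{max}-1$ and the product terminates at $l=S_{max}-R$.
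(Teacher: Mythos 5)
Your proof is correct and follows essentially the same route as the paper's: states below $R$ are transient, detailed balance of the birth--death chain at the threshold and above yields the one-step ratios, telescoping gives $\phi_R(R+l)$, and normalization fixes $\phi_R(R+1)$. Your explicit identification of $\{\phi_R(s)\}$ as the stationary law of the \emph{embedded jump chain} at decision epochs is a welcome clarification of what the paper's balance equations (with the extra $\lambda+\nu s$ exit-rate factors) do implicitly, but it is the same argument, not a different one.
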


\subsection{Whittle Index Policy}
We now describe how the solutions to the relaxed problem~(\ref{eq:mdp-relaxed}) are used to obtain a policy for the original problem~(\ref{eq:cmdp-opt1}). Whittle index policy assigns an index $W_m(S_{m,t})$ to the queues of each content $m\in\mathcal{M}$. This index $W_m(S_{m,t})$ depends upon current state $S_{m,t}$ and current time. The Whittle index policy then activates (caches) $B$ arms (contents) with the highest value of the indices $W_m(S_{m,t})$. Although this policy need not to be optimal for the original problem~(\ref{eq:cmdp-opt1}), it has been shown to be asymptotically optimal~\cite{weber1990index,verloop2016asymptotically} as the number of contents $M$ and the cache size $B$ are scaled up, while keeping their ratio as a constant.

\section{Whittle Index based Q-learning with LFA}\label{sec:learning}

In order to implement the Whittle index policy that was discussed in Section~\ref{sec:index}, one needs to know the controlled transition probabilities of each of the $M$ per-content MDPs. Since this information is often not available, and moreover these parameters are time-varying, we now develop reinforcement learning algorithms that learn the Whittle index policy for wireless edge caching. Specifically, we design a model-free reinforcement learning augmented algorithm with linear function approximation~(LFA), which we call \qwhittleLFA, which leverages the threshold structure of the optimal policy developed in Section~\ref{sec:index} while learning  Q-functions for different state-action pairs.  Similar to Section~\ref{sec:index}, we focus on learning the Whittle index for each per-content MDP, and hence drop the subscript $m$  for ease of presentation.

\subsection{Preliminaries} 
We first review some preliminaries for Q-learning for Whittle index policy, which was first proposed in \cite{fu2019towards} for the discounted cost setup and further generalized in \cite{avrachenkov2020whittle} for average cost setup.  

Consider the dynamic programming equations associated with the per-content MDP in \eqref{eq:decoupledduallag},
\begin{align}\label{eq:dynamic_programming}
 V(s)\!+\!\tilde{\beta}^*&=\min_{a\in\{0,1\}}\Bigg\{a\Big(s+\sum_{s^\prime}p(s^\prime |s,1)V(s^\prime)\Big) \nonumber\allowdisplaybreaks\\
 &+(1-a)\Big(s-W+\sum_{s^\prime}p(s^\prime |s,0)V(s^\prime)\Big)\Bigg\}, 
\end{align}
where $\tilde{\beta}^* \in\mathbb{R}$ is the optimal value of the long-term average cost of the MDP with the Lagrange multiplier set equal to $W$, and $V(\cdot)$ is the relative value function. The corresponding Q-function is given as follows \cite{bertsekas1995dynamic},
\begin{align}\label{eq:Q_value}
    Q(s,a)\!+\!\tilde{\beta}^*\!=s\!-\!(1-a)W(s)\!+\! \sum\nolimits_{s^\prime}p(s^\prime |s,a)V(s^\prime),
\end{align}
where value function $V(\cdot)$ satisfies $V(s)=\min_{a\in\{0,1\}} Q(s,a)$. 
We now discuss a relation satisfied by the Whittle indices $\{W(s)\}_{s\in\cS}$, that was derived in~\cite{fu2019towards}. When the Lagrange multiplier $W$ is set equal to the Whittle index $W(s)$, actions $0$ and $1$ are equally favorable in state $s$, i.e., $Q(s, 0)= Q(s,1)$. Substituting for $Q(s,a)$ from~(\ref{eq:Q_value}) into the relation $Q(s, 0)= Q(s,1)$, we obtain the following relation for $W(s)$,
\begin{align}\label{eq:Whittle index_formulation}
  W(s) = \sum\nolimits_{s^\prime}p(s^\prime|s,0)V(s^\prime)- \sum\nolimits_{s^\prime}p(s^\prime|s,1)V(s^\prime). 
\end{align}

The work~\cite{avrachenkov2020whittle} proposed a tabular Whittle index-based Q-learning algorithm, which we call \qw for ease of exposition. The key aspect of \qw is that the updates of Q-function values and Whittle indices form a two-timescale stochastic approximation (2TSA), where the Q-function values are updated at a faster timescale for a given $W(s)$, and the Whittle indices are updated at a slower timescale.  More precisely, the Q-function values are updated as follows, 
\begin{align}\label{eq: Convention_Q_update}
     &Q_{n+1}(s,a)\!=\!Q_{n}(s,a)\!+\!\gamma_{n} \mathds{1}_{\{S_{n}=s,A_{n}=a\}}\Big(s-(1-a)W(s)\nonumber\\
     &\!+\!\max_{a}Q_{n}(S_{n+1},a)\!-\!I(Q_n)\!-\!Q_{n}(s,a)\!\Big),~n\!=\!0,1,\ldots,
\end{align}
where the subscript $n$ denotes the decision epoch for the  per-content MDP in \eqref{eq:decoupledduallag}, and $I(\cdot)$ is a reference function~\cite{puterman1994markov,abounadi2001learning}. Recall that decision epoch represents the moment when state of the per-content MDP changes. Note that reference functions are used only when performing relative Q-learning iterations for the average cost setup, and not used while optimizing cumulative discounted rewards.  $\{\gamma_{n}\}$ is a step-size sequence satisfying $\sum_n\gamma_{n}=\infty$ and $\sum_n\gamma_{n}^2<\infty$. 
Accordingly, the Whittle indices are updated as follows,
\begin{align}\label{eq:convention_lambda_update}
    W_{n+1}(s)=W_{n}(s)+\eta_{n}(Q_{n}(s,0)-Q_{n}(s,1)),
\end{align}
with the step-size sequence $\{\eta_{n}\}$ satisfying $\sum_n\eta_{n}=\infty$, $\sum_n\eta_{n}^2<\infty$ and $\eta_{n}=o(\gamma_{n})$. The coupled iterates~(\ref{eq: Convention_Q_update}) and~(\ref{eq:convention_lambda_update}) form a 2TSA, and a rigorous asymptotic convergence guarantee is provided in \cite{avrachenkov2020whittle}.

\subsection{\qwhittle}
While \cite{avrachenkov2020whittle} proposed a Q-learning based algorithm for learning Whittle indices, \qw requires a reference function $I(Q_n)$ in order to approximate the unknown parameter $\tilde{\beta}^*$. It is not clear how one should choose the reference function, since there is no unique choice and this function might be problem dependent. To circumvent this problem, a widely-adopted approach is to instead learn an optimal policy for the corresponding discounted-cost MDP, that differs from the average cost MDP only in that the future rewards are discounted. It follows from classical results on MDPs~\cite{blackwell1962discrete} that there exists a stationary deterministic policy that is optimal for all values of discount factor $\alpha$ that are sufficiently close to $1$.  Moreover this policy is also optimal for the average-cost MDP. This policy is known as the Blackwell optimal policy. Such a technique has been applied to the study of average-cost MDP in \cite{fu2019towards,wei2020model}. We will adopt a similar approach, and hence now focus on the discounted Q-learning.

We now use the structural result regarding an optimal policy for the per-content MDP \eqref{eq:decoupledduallag} in order to reduce the exploration overhead associated with the updates of Q-functions.  Specifically, by specializing the Q-learning iterations for a threshold policy with threshold $R$, one needs to update the Q-function values $Q(s,0)$ only for states $s=1,2,\ldots,R-1$ (and not for states $s\ge R$), while all other state-action values are left unchanged since the optimal action for all $s<R$ is deterministic, i.e., $a=0$. Similarly, for action $a=1$, we only need to update the Q-function values $Q(s,1)$ for $s>R$. When the arm is in state $R$, it randomizes between actions $0$ and $1$.  To keep the discussion simple, we assume that these two actions are chosen with equal probability when state is $R$. This key observation drastically reduces the complexity of Q-learning when it is applied to learn Whittle indices, as compared with the existing \qw  \cite{avrachenkov2020whittle}. Towards this end, we call this improved version of \qw algorithm, one which leverages the threshold structure of the optimal
policy, as \qwhittle.

Specifically, we consider the problem of learning the Whittle index for state $s=R$, and develop a recursive update scheme for learning it.  Let $Q^R_n(S_n, A_n)$ be the Q-function value during iteration $n$ with dependence on $R$.
The Q-function updates of \qwhittle are given as follows:

\noindent \textit{Case 1:} When $S_n > R$, we have 
\begin{align}\label{eq:Q_update1}
 & Q^R_{n+1}(S_n,1)\leftarrow
  (1-\gamma_n)Q^R_n(S_n,1)+\gamma_n S_n \nonumber\allowdisplaybreaks\\
  &+\gamma_n \Bigg(\underset{\text{Term}_{11}}{\underbrace{\alpha\mathds{1}_{(S_{n+1}> R)}Q^R_{n}(S_{n+1}, 1)}}+\underset{\text{Term}_{12}}{\underbrace{\alpha\mathds{1}_{(S_{n+1}< R)}Q^R_{n}(S_{n+1}, 0)}}\nonumber\allowdisplaybreaks\\
  &\qquad\qquad+\underset{\text{Term}_{13}}{\underbrace{\alpha\mathds{1}_{(S_{n+1}= R)}\min_{a}Q^R_{n}(S_{n+1}, a)}}\Bigg),
\end{align}
where the step-size sequence $\{\gamma_n\}$ satisfies $\sum_n \gamma_n=\infty$ and  $\sum_n \gamma_n^2<\infty$.  {$\text{Term}_{11}$ follows from the above insight that only Q-function values for states greater than $R$, i.e. $Q^R_n(S_n, 1), S_n>R$ need to be updated.~This differs significantly from \qw \cite{avrachenkov2020whittle},  where both $Q^R_n(\cdot, 1)$ and $Q^R_n(\cdot, 0)$ need to be updated when $S_n>R$. } This is due to the fact that our \qwhittle leverages the threshold-type optimal policy while performing Q-function updates, which either does not exist or is not leveraged in \cite{avrachenkov2020whittle,fu2019towards}.  Similar insights lead to the updates of $\text{Term}_{12}$ and $\text{Term}_{13}$.

\noindent \textit{Case 2:} When $S_n < R$, we have
\begin{align}\label{eq:Q_update2}
   & Q^R_{n+1}(S_n,0)\leftarrow (1-\gamma_n)Q^R_n(S_n,0)+\gamma_n (S_n-W_n(R))\nonumber\allowdisplaybreaks\\
    &+\gamma_n\Bigg(\underset{\text{Term}_{21}}{\underbrace{\alpha\mathds{1}_{(S_{n+1}>R)}{Q}^R_{n}(S_{n+1},1)}}+\underset{\text{Term}_{22}}{\underbrace{\alpha\mathds{1}_{(S_{n+1}<R)}Q^R_{n}(S_{n+1},  0)}}\nonumber\allowdisplaybreaks\\
    &\qquad\qquad+\underset{\text{Term}_{23}}{\underbrace{\alpha\mathds{1}_{(S_{n+1}=R)}\min_{a}Q^R_n(S_{n+1},  a)}}
 \Bigg),
\end{align}
where the updates of $\text{Term}_{21}$, $\text{Term}_{22}$ and $\text{Term}_{23}$ leverage similar insights as those in Case 1.

\noindent \textit{Case 3:} When $S_n = R$, $Q^R_{n}(S_{n},A_n)$ gets updated according to either \eqref{eq:Q_update1} or \eqref{eq:Q_update2} with equal probability.  

In summary, the Q-function updates of \qwhittle are given as 
\begin{align}\label{eq:Q_update}
  Q^R_{n+1}(s,a)=
\begin{cases}
 \eqref{eq:Q_update1}~\text{or}~\eqref{eq:Q_update2},~\text{if $(s,a)=(S_n,A_n),$}\\
  Q_n^R(s,a),~\text{otherwise.} 
\end{cases}
\end{align}
With the above Q-function updates, the parameter $W$ under the threshold policy with threshold $R$ is updated as follows,
\begin{align}\label{eq:W_update}
    W_{n+1}(R)=W_n(R)+\eta_n\Big(Q^R_n(R,0)-Q^R_n(R,1)\Big),
\end{align}
where the step-size sequence $\{\eta_n\}$ satisfies $\sum_n \eta_n=\infty$, $\sum_n \eta_n^2<\infty$ and $\eta_{n}=o(\gamma_{n})$. 

\qwhittle is summarized in Algorithm \ref{Algorithm2}. Q-function and $W$ updates discussed above remain the same for all $M$ contents (lines 4-8). Since the wireless edge can cache at most $B$ contents, an easy implementation is to find the possible activation set $\mathcal{C}:=\{m\in \mathcal{M}|S_m(t)\geq R\}$ for threshold $R$  at time/epoch $t$ and activate $\min(B,|\mathcal{C}|)$ arms with highest Whittle indices $W_{m,t}(S_{m,t})$. Note that $t$ is the moment when the state of any of the $M$ per-content MDPs changes.

\begin{algorithm}[t]
\caption{\qwhittle for Per-Content MDP}
\label{Algorithm2}
\begin{algorithmic}[1]
\STATE Initialize: ${Q}^{s^\prime}_{0}(s,a)=0,~ W_{0}(s)=0$, $\forall s, s^\prime\in\mathcal{S}$.
\FOR {$R\in\mathcal{S}$}
\STATE Set the threshold policy as $\pi=R$.
\FOR{$n=1,2,\ldots, T$}
\STATE Update ${Q}^R_{n}(s_{n},a_{n})$ according to~\eqref{eq:Q_update}.
\STATE Update $W_{n}(R)$ according to~(\ref{eq:W_update}).
\ENDFOR
\STATE $W_{0}(R+1)=W_{T}(R)$, ${Q}_{0}^{R+1}(s,a)={Q}_{T}^{R}(s,a)$.
\ENDFOR
\STATE Return: $W(s), \forall s\in\mathcal{S}$.
\end{algorithmic}
\end{algorithm}

\begin{remark}\label{rem:q-learning-comparison}
Some definitions (e.g., $W(s)$) in this paper are similar to those in \cite{avrachenkov2020whittle,fu2019towards}, which studied \qw through a two-timescale update. However, our \qwhittle differs from those in \cite{avrachenkov2020whittle,fu2019towards} from two perspectives. First, \cite{avrachenkov2020whittle,fu2019towards} adopted the conventional $\epsilon$-greedy rule for Q-function value updates.  In contrast, we leverage the property of optimal threshold-type policy into Q-function value updates as in \eqref{eq:Q_update1} and \eqref{eq:Q_update2}.  Such a threshold-type Q-function value update dramatically reduces the computational complexity  since each state only has a fixed action to explore.  Second, the threshold policy further enables us to update Whittle indices in an incremental manner, i.e., the converged Whittle index in state $s$ can be taken as the initial value for the subsequent state $s+1$ (line 8 in Algorithm \ref{Algorithm2}), instead of being randomly initiated as in \cite{fu2019towards,avrachenkov2020whittle}.  This further speeds up the learning process.  
In addition, \cite{fu2019towards} lacked convergence guarantee. 
Recently, another line of work \cite{biswas2021learn} leveraged Q-learning to approximate Whittle indices through a single-timescale SA, where Q-function and Whittle indices were learned independently.  \cite{biswas2021learn} considered  the finite-horizon MDP and cannot be directly applied to infinite-horizon discounted or average cost MDPs considered in this paper.    
\end{remark}

\subsection{\qwhittle with Linear Function Approximation}\label{sec:lfa}
When the number of state-action pairs is very large, which is often the case for wireless edge caching, \qwhittle can be intractable due to the curse of dimensionality.  A closer look at~(\ref{eq:W_update}) further reveals that the Whittle index is updated only when state $s$ is visited. This can significantly slow down the convergence process of the corresponding 2TSA when the state space is large.  To overcome this difficulty, we further study \qwhittle with linear function approximation (LFA) by using low-dimensional linear approximation of $Q$ on a linear subspace with {dimension $d\ll |\mathcal{S}||\mathcal{A}|$.}  We call this algorithm as \qwhittleLFA.

Specifically, given a set of basis functions $\phi_\ell: \cS\times \cA\mapsto \mathbb{R},~\ell=1,\cdots, d$, the approximation of the Q-function $\tilde{Q}_{\theta}(s,a)$ parameterized by a unknown weight vector ${\theta}\in\mathbb{R}^d$, is given by $\tilde{Q}_{\theta}(s,a)={\phi}(s,a)^{\intercal} {\theta}, ~\forall s\in\mathcal{S}, a\in\mathcal{A},$ where ${\phi}(s,a)=(\phi_1(s,a), \cdots,\phi_d(s,a))^\intercal$. The feature vectors are assumed to be linearly independent and are normalized so that $\|{\phi}(s,a)\|\leq 1, \forall s\in\mathcal{S}, a\in\mathcal{A}$.

Similar to \qwhittle, we consider the problem of learning the Whittle index for state $s=R$, which can be equivalently formulated as the problem of learning the coefficient $\theta$.  Let $\theta^R_n$ be its value during iteration $n$, which depends on the value of $R$.  Leveraging the same ideas for Q-function updates in~(\ref{eq:Q_update}), \qwhittleLFA iteratively updates $\theta_n^R$ as follows:


\noindent \textit{Case 1:} When $S_n> R$, we have  
\begin{align}\label{eq:theta_update1}
  &\theta^R_{n+1}\!\leftarrow\!
 \theta^R_n+\gamma_n \phi(S_n,1)\Bigg[S_n +{{\alpha\mathds{1}_{(S_{n+1}> R)}\phi(S_{n+1}, 1)^\intercal \theta^R_n}}\nonumber\allowdisplaybreaks\\
 &+{{\alpha\mathds{1}_{(S_{n+1}< R)}\phi_{n}(S_{n+1}, 0)^\intercal \theta^R_n}}\nonumber\displaybreak[1]\\
  &\!+\!{{\alpha\mathds{1}_{(S_{n+1}= R)}\min_{a}\phi(S_{n+1}, a)^\intercal \theta^R_n}}\!\!-\!\phi(S_{n},1)^\intercal \theta^R_n\Bigg].
\end{align}

\noindent \textit{Case 2:} When $S_n< R$, we have,
\begin{align}\label{eq:theta_update2}
   \hspace{-0.3cm} &\theta^R_{n+1}\!\!\leftarrow\! \theta^R_n+\gamma_n\phi(S_n,0)\Bigg[ (S_n-W)\nonumber\allowdisplaybreaks\\
    &+{{\alpha\mathds{1}_{(S_{n+1}>R)}{\phi}_{n}(S_{n+1},1)^\intercal\theta^R_n}}+{{\alpha\mathds{1}_{(S_{n+1}<R)}\phi_{n}(S_{n+1},  0)^\intercal\theta^R_n}}\nonumber\displaybreak[1]\\
    &+{{\alpha\mathds{1}_{(S_{n+1}=R)}\min_{a}\phi_n(S_{n+1},  a)^\intercal \theta^R_n}}\!\!-\!\phi(S_{n},0)^\intercal \theta^R_n\Bigg].
\end{align}

\noindent \textit{Case 3:} When $S_n= R$, the update occurs either according to  \eqref{eq:theta_update1} or \eqref{eq:theta_update2} with an equal probability. 

The iterations can be summarized as follows, 
\begin{align}\label{eq:theta_update}
  \theta^R_{n+1}=
\begin{cases}
 \eqref{eq:theta_update1}~\text{if $S_n> R$},\\
 \eqref{eq:theta_update2}~\text{if $S_n< R$},\\
 \eqref{eq:theta_update1}~\text{or }\eqref{eq:theta_update2},~\text{if $S_n= R$}.
\end{cases}
\end{align}
We now derive a similar iterative scheme for learning the Whittle indices. To do this, we consider the Whittle index update in~(\ref{eq:W_update}), and replace the Q-function values $Q^R_n(R,0),Q^R_n(R,1)$ by their linear function approximations $\phi(R,0)^{\intercal}\theta^R_n$ and $\phi(R,1)^{\intercal}\theta^R_n$, respectively. This gives us the following iterations,
\begin{align}    \label{eq:lambda_update}
    W_{n+1}(R)=W_{n}(R)+\eta_n({\phi}(R,0)^{\intercal} {\theta}^R_n-{\phi}(R,1)^{\intercal} {\theta}^R_n).
\end{align}
We summarize \qwhittleLFA in Algorithm \ref{Algorithm3}, which is one of our key contributions in this paper.

\begin{algorithm}[t]
\caption{ \qwhittleLFA for Per-Content MDP }
\label{Algorithm3}
\begin{algorithmic}[1]
\STATE Initialize: ${\phi}(s,a),~ {\theta}_{0}, W_{0}(s)=0$ for $\forall s\in\mathcal{S}, a\in\mathcal{A}$.
\FOR{$R\in\mathcal{S}$}
\STATE Set the threshold policy as $\pi=R$.
\FOR{ $n=1,2,\ldots, T$}
\STATE Update ${\theta}^R_{n}$ according to~\eqref{eq:theta_update}.
\STATE Update $W_{n}(R)$ according to~(\ref{eq:lambda_update}).
\ENDFOR
\STATE $W_{0}(R+1)=W_{T}(R)$, ${\theta}_{0}^{R+1}(s,a)={\theta}_{T}^{R}(s,a)$.
\ENDFOR
\STATE Return: $W(s), \forall s\in\mathcal{S}$.
\end{algorithmic}
\end{algorithm}

\section{Finite-Time Performance Analysis}\label{sec:finite-time-analysis}

In this section, we provide a finite-time analysis of our \qwhittleLFA algorithm, which can be viewed through the lens of 2TSA.  Our key technique is motivated by \cite{doan2020nonlinear}, which deals with  a general nonlinear 2TSA.  To achieve this goal, we first need to rewrite our \qwhittleLFA updates in \eqref{eq:theta_update} and \eqref{eq:lambda_update} in the form of a 2TSA. Throughout this section, we will perform the analysis for any threshold policy $\pi=R$, and hence we drop the superscript $R$ for ease of presentation.

\subsection{Two-Timescale Stochastic Approximation}

Given the threshold policy $\pi=R$, the corresponding true Whittle index associated with the threshold state $R$ is $W(R)$.  Denote $\theta^R$ as the optimal $\theta$ obtained by \qwhittleLFA in Algorithm \ref{Algorithm3}. Following the conventional ODE method~\cite{borkar2009stochastic}, we begin by converting \qwhittleLFA in \eqref{eq:theta_update} and \eqref{eq:lambda_update} into a standard 2TSA. In particular, we rewrite the updates \eqref{eq:theta_update} and \eqref{eq:lambda_update} as follows,
\begin{align}
    \theta_{n+1}&=\theta_n+\gamma_n[h(\theta_n, W_n)+\xi_{n+1}],\label{eq:Q_2TSA}\\
    W_{n+1}&=W_n+\eta_n[g(\theta_n,W_n)+\psi_{n+1}],\label{eq:W_2TSA}
\end{align}
where $\{\xi_n\}$ is an appropriate martingale difference sequence with respect to the filtration $\sigma$-field $\mathcal{F}_n=\{\theta_0,W_0,\xi_0, \ldots, \theta_n, W_n, \xi_n\}, n=1,2,\ldots$; $\{\psi_n\}$ is a suitable error sequence;  $h$ and $g$ are appropriate Lipschitz functions defined below that satisfy the conditions needed for our ODE analysis, and the step sizes $\gamma_n,\eta_n$ satisfy Assumption~\ref{assumption:stepsize} below. Note that the $\theta_n$ and $W_n$ iterations are coupled. By using the operator defined in~(\ref{eq:operator}), we rewrite the $\theta$ update in \eqref{eq:theta_update} as follows,
\begin{align}
&\theta_{n+1}=\theta_{n}+\gamma_n\phi(S_n,A_n)\Big[[\mathcal{T}{\theta}_n](S_n,A_n)\nonumber\allowdisplaybreaks\\
    &-\phi(S_n,A_n)^\intercal\theta_{n}+\xi_{n+1}(S_n,A_n)\Big],~\forall (S,A)\in\mathcal{S}\times\mathcal{A},
\end{align}
where, 
\begin{align}
    [\mathcal{T}{\theta}_n](S_n,A_n)&=S_n-(1-A_n)W_n\nonumber\allowdisplaybreaks\\
    &+\alpha\sum_{s^\prime}p(s^\prime|S_n,A_n)\min_{a^\prime}\phi(s^\prime, a^\prime)^\intercal\theta_n,\\
    \xi_{n+1}(S_n,A_n)&\!=\!S_n\!-\!(1\!-\!A_n)W_n\!+\!\alpha\min_{a}\phi(S_{n+1},a)^\intercal\theta_n\nonumber\allowdisplaybreaks\\
    &-[\mathcal{T}{\theta}_n](S_n,A_n).
\end{align}
Hence, we have  $h(\theta_n, W_n)$ in \eqref{eq:Q_2TSA} as 
\begin{align}\label{eq:h-ODE}
    h(\theta_n, W_n):=[\mathcal{T}\theta_n](S_n,A_n)-\phi(S_n,A_n)^\intercal\theta_n,
\end{align}
which is Lipschitz in both $\theta$ and $W$. 
Similarly, we have
\begin{align}\label{eq:g-ODE}
    g(\theta_n, W_n):={\phi}(R,0)^{\intercal} {\theta}_n-{\phi}(R,1)^{\intercal} {\theta}_n,
\end{align}
which is Lipschitz in $\theta.$  W.l.o.g., we assume $\psi_{n}=0$ for all $n$ since the update of $W$ in~(\ref{eq:lambda_update}) is deterministic. After having identified these two functions, i.e., \eqref{eq:h-ODE} and \eqref{eq:g-ODE},  the asymptotic convergence of our 2TSA can be established by using the ODE method 
\cite{borkar2009stochastic,suttle2021reinforcement, avrachenkov2020whittle, doan2019linear,doan2020nonlinear}.  For ease of exposition, we temporally assume fixed step size here, then the 2TSA is reduced to the following differential equations:
\begin{align}\label{eq:2TDE}
    \dot{\theta}(t)=h(\theta(t),W(t)),\quad    \dot{W}(t)&=\frac{\eta}{\gamma}g(\theta(t),W(t)),
\end{align}
where the ratio $\eta/\gamma$ represents the difference in timescale between these two updates.  
Our focus here is on characterizing the finite-time convergence rate of $(\theta_n, W_n)$ to the globally asymptotically optimal equilibrium point $(\theta^R, W(R))$ of~(\ref{eq:2TDE}) for each $R.$ Using an idea of  \cite{doan2020nonlinear}, the key part of our analysis is based upon an appropriate choice of two step sizes $\eta_n, \gamma_n$, and a Lyapunov function.  We first define the following two ``error terms,'' 
\begin{align}\label{eq:residual}
    \tilde{\theta}_n :&=\theta_n-f(W_n),\quad 
    \tilde{W}_n:=W_n-W(R),
\end{align}
which characterizes the coupling between $\theta_n$ and $W_n$. If we are able to show that $\tilde{\theta}_n$ and $\tilde{W}_n$ simultaneously converge to zero, then we would have shown $(\theta_n,W_n)\rightarrow (\theta^R,W(R))$. Thus, to prove the convergence of $(\theta_n,W_n)$ of our 2TSA to its true value $(\theta^R,W(R))$, we instead study the convergence of $(\tilde{\theta}_n,\tilde{W}_n)$ by providing the finite-time analysis for the  mean squared error generated by~(\ref{eq:Q_2TSA})-(\ref{eq:W_2TSA}). In order to simultaneously study the properties of $\tilde{\theta}_n$ and $\tilde{W}_n$, we consider the following Lyapunov function 
\begin{align}\label{eq:lyapunov-function}
    M(\theta_n, W_n)&:=\frac{\eta_n}{\gamma_n}\|\tilde{\theta}_n\|^2+\|\tilde{W}_n\|^2\nonumber\allowdisplaybreaks\\
    &=\frac{\eta_n}{\gamma_n}\|\theta_n-f(W_n)\|^2+\|W_n-W(R)\|^2.
\end{align}
We make the following assumptions while analyzing~(\ref{eq:Q_2TSA})-(\ref{eq:W_2TSA}).  

\begin{assumption}\label{assumption:Lipschitz}
Provided any $W\in\mathbb{R}$, there exists an operator $f$ such that $\theta=f(W)$ is the unique solution to $h(\theta, W)=0,$ where $h$ and $f$ are Lipschitz continuous with positive constants $L_h$ and $L_f$ such that
\begin{align}\nonumber
   & \|f(W)-f(W^\prime)\|\leq L_f\|W-W^\prime\|, \\
    &\|h(\theta,W)\!-\!h(\theta^\prime,W^\prime)\|\leq L_h(\|\theta\!-\!\theta^\prime\|\!+\!\|W\!-\!W^\prime\|).
\end{align}
The operator $g$ in \eqref{eq:W_2TSA}  is Lipschitz continous with constant $L_g$, i.e.,
\begin{align}
    \|g(\theta,W)-g(\theta^\prime,W^\prime)\|\leq L_g(\|\theta-\theta^\prime\|+\|W-W^\prime\|).
\end{align}
\end{assumption}

\begin{remark}
The Lipschitz continuity of the functions $f,g,h$ guarantees the existence of solutions to the ODEs~\eqref{eq:2TDE}.  Note that when $h$ and $g$ are linear functions of $\theta$ and ${W}$, Assumption \ref{assumption:Lipschitz} is automatically satisfied. This assumption is widely used for both linear and nonlinear 2TSA \cite{chen2019performance,gupta2019finite,doan2020nonlinear}. 
\end{remark}

\begin{assumption}\label{assump:stable}
There exist  $\mu_1>0$ and  $\mu_2>0$ such that
\begin{align}
    \tilde{{\theta}}^\intercal h({\theta}, W)&\leq -\mu_1\|\tilde{{\theta}}\|^2,~\forall \theta,\tilde{\theta}\in\mathbb{R}^d,~W\in\mathbb{R},\nonumber\\
    \tilde{W}g({\theta}, W)&\leq -\mu_2\|\tilde{W}\|^2,~\forall \tilde{W},W\in\mathbb{R},\theta\in\mathbb{R}^d.
\end{align}
\end{assumption}
\begin{remark}
Assumption~\ref{assump:stable} guarantees the uniqueness of the solution to the ODEs~\eqref{eq:2TDE}. This assumption can be viewed as a relaxation of the monotone property of the nonlinear mappings \cite{doan2020nonlinear, chen2019performance}, since it is automatically satisfied if $h$ and $g$ are strongly monotone as is assumed in \cite{doan2020nonlinear}. 
\end{remark}

\begin{assumption}\label{assumption:mean-variance}
Random variables $\xi_n$ are independent of each other and across time, with zero mean and bounded variances
\begin{align*}
\mathbb{E}[\xi_n|\mathcal{F}_{n-1}]=0, \qquad
    \mathbb{E}[\|\xi_n\|^2|\mathcal{F}_{n-1}]\le \Lambda,
\end{align*}
where $\Lambda>0$.
\end{assumption}

\begin{assumption}\label{assumption:stepsize}
The step sizes $\gamma_n$ and $\eta_n$ satisfy $\sum_{n=0}^\infty \gamma_n=\sum_{n=0}^\infty \eta_n=\infty$,$\sum_{n=0}^\infty \gamma_n^2<\infty,$ $\sum_{n=0}^\infty \eta_n^2<\infty,$ $\eta_n/\gamma_n$ is non-increasing in $n$ and $\lim_{n\rightarrow\infty}\eta_n/\gamma_n=0.$
\end{assumption}
\begin{remark}
These assumptions are standard in SA literature  \cite{borkar2009stochastic,suttle2021reinforcement, avrachenkov2020whittle, doan2019linear,doan2020nonlinear}. Assumption~\ref{assumption:mean-variance} holds since 
$\xi_{n}(s,a)=s-(1-a)W_n+\alpha\min_{a}\phi(S_{n+1},a)^\intercal\theta_n-[\mathcal{T}{\theta}_n](s,a)$, thus $\mathbb{E}[\xi_n|\mathcal{F}_{n-1}]=0.$ 
\end{remark}

\subsection{Finite-Time Analysis of \qwhittleLFA}

\begin{theorem}\label{thm:convergence}
Consider the iterates $\{{\theta}_n\}$ and $\{{W}_n\}$ generated by~(\ref{eq:theta_update}) and~(\ref{eq:lambda_update}) for learning the Whittle indices, and suppose that Assumptions~\ref{assumption:Lipschitz}-\ref{assumption:stepsize} hold true. Let the step-sizes be chosen as $\gamma_n=\frac{\gamma_0}{(n+1)^{5/9}},\eta_n=\frac{\eta_0}{(n+1)^{10/9}}$. Then we have
\begin{align}\label{eq:bound}
  &\mathbb{E}[M({\theta}_{n+1},W_{n+1})|\mathcal{F}_{n}] \le \frac{\mathbb{E}[M({\theta}_0,W_0)]}{(n+1)^2}
    \nonumber\allowdisplaybreaks\\
    & +\!\frac{C_1(\|\tilde{{\theta}}_0\|^2+\|\tilde{W}_0\|^2)}{(n+1)^{2/3}}\!+\!\frac{\gamma_0\eta_0\Lambda}{(n+1)^{2/3}},~n=1,2,\ldots,
\end{align}
where $C_1=(L_h^2+L_f^2+2L_g^2(L_f+1)^2)\alpha_0\eta_0+2L_g^2(L_f+1)^2\left(L_f^2+{(1+L_h\alpha_0)^2}\right)\frac{\eta_0^3}{\gamma_0^3}$.
\end{theorem}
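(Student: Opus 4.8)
The plan is to follow the nonlinear two-timescale template of \cite{doan2020nonlinear} and track the two coupling errors $\tilde{\theta}_n=\theta_n-f(W_n)$ and $\tilde{W}_n=W_n-W(R)$ from~\eqref{eq:residual} through the single time-varying Lyapunov function $M(\theta_n,W_n)$ in~\eqref{eq:lyapunov-function}. Establishing~\eqref{eq:bound} reduces to deriving a one-step contraction for $\mathbb{E}[M(\theta_{n+1},W_{n+1})\mid\mathcal{F}_n]$ and then unrolling it under the prescribed step sizes.

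First I would analyze the fast iterate. Writing $\tilde{\theta}_{n+1}=\tilde{\theta}_n+\gamma_n[h(\theta_n,W_n)+\xi_{n+1}]+\big(f(W_n)-f(W_{n+1})\big)$, I expand $\|\tilde{\theta}_{n+1}\|^2$ and take the conditional expectation. Assumption~\ref{assumption:mean-variance} annihilates the martingale cross term and contributes the variance $\gamma_n^2\Lambda$. The inner-product drift is handled by Assumption~\ref{assump:stable}, giving $\langle\tilde{\theta}_n,h(\theta_n,W_n)\rangle\le-\mu_1\|\tilde{\theta}_n\|^2$ and hence the contraction factor $(1-2\mu_1\gamma_n)$. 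Since $h(f(W),W)=0$ by Assumption~\ref{assumption:Lipschitz}, the remaining quadratic terms are controlled via $\|h(\theta_n,W_n)\|\le L_h\|\tilde{\theta}_n\|$, while the tracking error satisfies $\|f(W_n)-f(W_{n+1})\|\le L_f\eta_n\|g(\theta_n,W_n)\|$, with $\|g(\theta_n,W_n)\|\le L_g(\|\tilde{\theta}_n\|+(L_f+1)\|\tilde{W}_n\|)$ obtained from the Lipschitzness of $g$ together with $\theta^R=f(W(R))$. Young's inequality then splits every mixed $\|\tilde{\theta}_n\|\,\|\tilde{W}_n\|$ term into an $O(\eta_n)$ multiple of $\|\tilde{\theta}_n\|^2$, absorbed by the $\gamma_n$-drift, and an $O(\eta_n)$ multiple of $\|\tilde{W}_n\|^2$.

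For the slow iterate, since $\psi_n=0$ the update is noiseless, so expanding $\|\tilde{W}_{n+1}\|^2$ and invoking $\tilde{W}_n g(\theta_n,W_n)\le-\mu_2\|\tilde{W}_n\|^2$ from Assumption~\ref{assump:stable} yields the factor $(1-2\mu_2\eta_n)$, while the residual $\eta_n^2\|g(\theta_n,W_n)\|^2$ contributes an $O(\eta_n^2)$ coupling back into $\|\tilde{\theta}_n\|^2$. I then combine the two estimates with weights $\eta_{n+1}/\gamma_{n+1}$ and $1$, using the monotonicity of $\eta_n/\gamma_n$ (Assumption~\ref{assumption:stepsize}) to compare the new weight against $\eta_n/\gamma_n$. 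Because $\eta_n=o(\gamma_n)$, for $n$ large the cross-coupling multipliers are dominated by the negative drifts $-2\mu_1\gamma_n$ and $-2\mu_2\eta_n$, leaving a recursion of the form $\mathbb{E}[M_{n+1}\mid\mathcal{F}_n]\le(1-\delta_n)M_n+\tfrac{\eta_{n+1}}{\gamma_{n+1}}\gamma_n^2\Lambda+(\text{coupling error})$.

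Finally I would substitute $\gamma_n=\gamma_0(n+1)^{-5/9}$ and $\eta_n=\eta_0(n+1)^{-10/9}$, so the noise term scales like $\eta_n\gamma_n\Lambda\sim(n+1)^{-5/3}$ and the effective contraction telescopes to the factor $(n+1)^{-2}$ on the initial Lyapunov value; summing the accumulated $(n+1)^{-5/3}$ errors against this contraction produces the $(n+1)^{-2/3}$ rate and collects the constant $C_1$. The main obstacle is this combination step: decoupling the two timescales so that both the $\|\tilde{\theta}_n\|^2$ and $\|\tilde{W}_n\|^2$ drifts stay strictly negative \emph{simultaneously}. This forces a careful balancing of the Young's-inequality weights against the ratio $\eta_n/\gamma_n$, and requires promoting the ``large-$n$'' domination of the mixed terms to a bound valid for all $n$ (treating the finitely many early iterates separately), which is exactly what pins down the admissible exponents $5/9$ and $10/9$ and the final polynomial rates.
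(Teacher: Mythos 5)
Your proposal follows essentially the same route as the paper's proof: the same error decomposition $\tilde{\theta}_n=\theta_n-f(W_n)$, $\tilde{W}_n=W_n-W(R)$, the same one-step bounds on each iterate (martingale term killed by Assumption~\ref{assumption:mean-variance}, drift from Assumption~\ref{assump:stable}, residuals via $h(f(W),W)=0$, $g(f(W(R)),W(R))=0$, Lipschitz constants, and Young's inequality), the same $\eta_n/\gamma_n$-weighted Lyapunov combination under the monotonicity in Assumption~\ref{assumption:stepsize}, and the same final step of multiplying by $(n+1)^2$ and telescoping so that $\gamma_n\eta_n\sim(n+1)^{-5/3}$ yields the $(n+1)^{-2/3}$ rate and the constant $C_1$. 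The only minor variation is bookkeeping: where you absorb the cross-coupling terms into the negative drifts for large $n$ (handling early iterates separately), the paper instead keeps them as additive errors and bounds $\|\tilde{\theta}_n\|^2+\|\tilde{W}_n\|^2$ by its initial value using the non-expansive recursion established in its Lemma~\ref{lemma4} --- two interchangeable ways of closing the same estimate.
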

The first term of the right hand side of (\ref{eq:bound}) corresponds to the bias due to the initialization, which goes to zero at a rate $\mathcal{O}(1\slash n^2)$. The second term corresponds to the accumulated estimation error of the nonlinear 2TSA.  The third term stands for the error introduced due to the fluctuations of the martingale difference noise sequence $\{\xi_n\}$ in \eqref{eq:Q_2TSA}. The second and third terms in the right hand side of (\ref{eq:bound}) decay at a rate $\mathcal{O}(1\slash n^{2/3})$, and hence dominate the overall convergence rate in~(\ref{eq:bound}).  The proof is presented in Appendix~\ref{sec:QWhittle-learning-convergence-app}.

\begin{remark}
Our finite-time analysis of \qwhittleLFA consists of two steps.  First, we rewrite \qwhittleLFA updates into a 2TSA in~(\ref{eq:Q_2TSA})-(\ref{eq:W_2TSA}).  The key is to identify two critical terms $h$ and $g.$ Second, we prove a bound on finite-time convergence rate of \qwhittleLFA by leveraging and generalizing the machinery of nonlinear 2TSA \cite{doan2020nonlinear}. The key is to the choice of two step sizes (as characterized in Theorem~\ref{thm:convergence}) and a Lyapunov function given in~(\ref{eq:lyapunov-function}).  Though the main steps of our proof are motivated by \cite{doan2020nonlinear}, we need to characterize the specific requirements for our settings as aforementioned.  Need to mention that we do not need the assumption that $h$ and $g$ are strongly monotone as in \cite{doan2020nonlinear}, and hence requires a re-derivation of the main results.    
\end{remark}

\section{Numerical Results}\label{sec:evaluation}
In this section, we numerically evaluate the performance of our \qwhittleLFA algorithm using both synthetic and real traces.

\subsection{Baselines and Experiment Setup}

We compare \qwhittleLFA to existing learning based algorithms for wireless edge caching.  In particular, we focus on both Q-learning based Whittle index policy for wireless edge caching (see Remark~\ref{rem:q-learning-comparison}) such as \emph{Q-learning Whittle Index Controller} (QWIC) \cite{fu2019towards}, \qw \cite{avrachenkov2020whittle}, \emph{Whittle Index Q-learning} (WIQL) \cite{biswas2021learn} and \emph{Deep Threshold Optimal Policy Training} (DeepTOP) \cite{nakhleh2022deeptop}; and existing learning based algorithms for wireless edge caching  such as  \emph{Follow-the-Perturbed-Leader} (FTPL) \cite{bhattacharjee2020fundamental}, \emph{Deep Q-Learning} (DQL) \cite{wu2019dynamic} and \emph{Deep Actor-Critic} (DAC) \cite{zhong2020deep}.  We also compare these learning based algorithms to our Whittle index policy (see Section~\ref{sec:index}), which is provably asymptotically optimal when system parameters are known.  For the above algorithms using neural networks, we consider two hidden layers with size (64, 32), with external memory size being 10,000 and batch size being 10. The discount factor is $\alpha = 0.98$. The learning rates are initialized to be $\gamma_0 = 0.1$ and $\eta_0 = 0.01$, and are decayed by 1.1 every $1,000$ time steps.  In LFA, we set $d = 20$.

\begin{figure}[t]
\centering
 \includegraphics[width=0.5\textwidth]{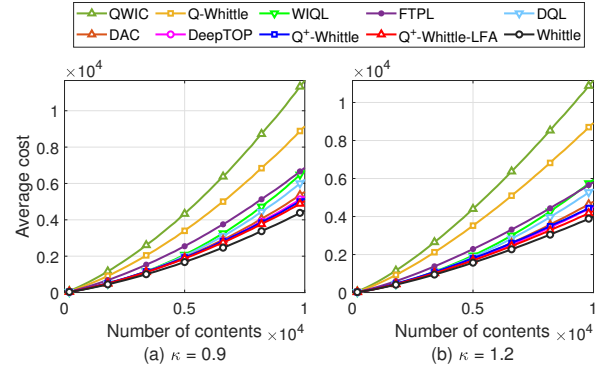}
\caption{Accumulated cost (latency) using synthetic traces. }
\label{fig:learning-cost-synthetic}
\vspace{-0.15in}
\end{figure}

\subsection{Evaluation Using Synthetic Traces}\label{sec:evaluation-sync}

We simulate a system with the number of distinct contents $M$ ranging from $200$ to $10,000$ with a step size of $200$.  In each case, content requests are drawn from a Zipf distribution with  Zipf parameters $\kappa$ of $0.9$ and $1.2$.  As we consider a state-dependent delivery rate in our model~(\ref{eq:queuetrans}),  we set the ``unit rate''  $\nu=18$ with the true delivery rate of $\nu SA,$ and the total number of requests varies across each $M$. The cache size is $B=M/10$.

\begin{figure*}[t]
\centering
\begin{minipage}{.48\textwidth}
\centering
 \includegraphics[width=1\columnwidth]{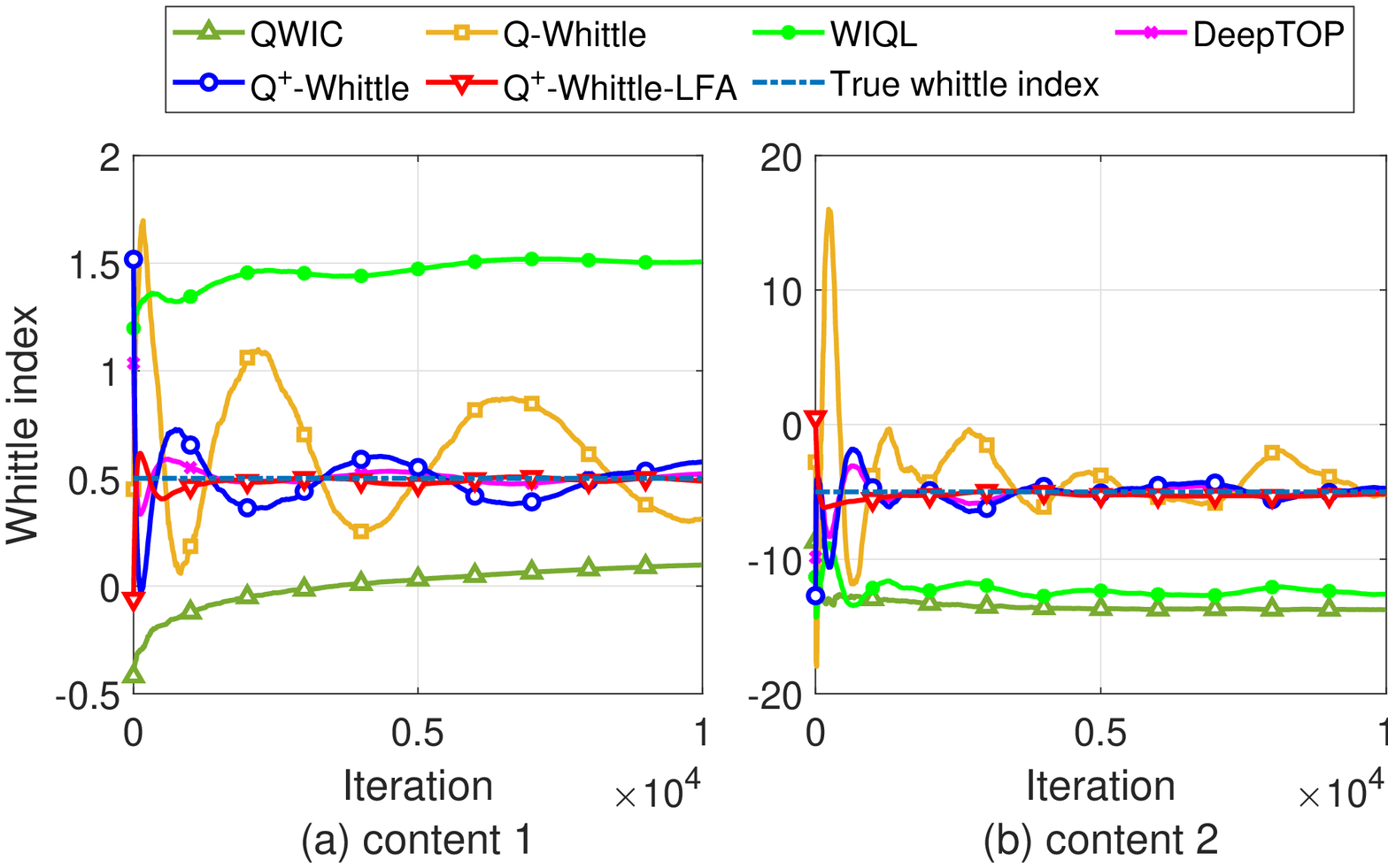}
\caption{Convergence in terms of iterations of Whittle index based Q-learning algorithms for two randomly selected contents.}
\label{fig:convergence}
\end{minipage}\hfill
\begin{minipage}{.48\textwidth}
\centering
 \includegraphics[width=1\columnwidth]{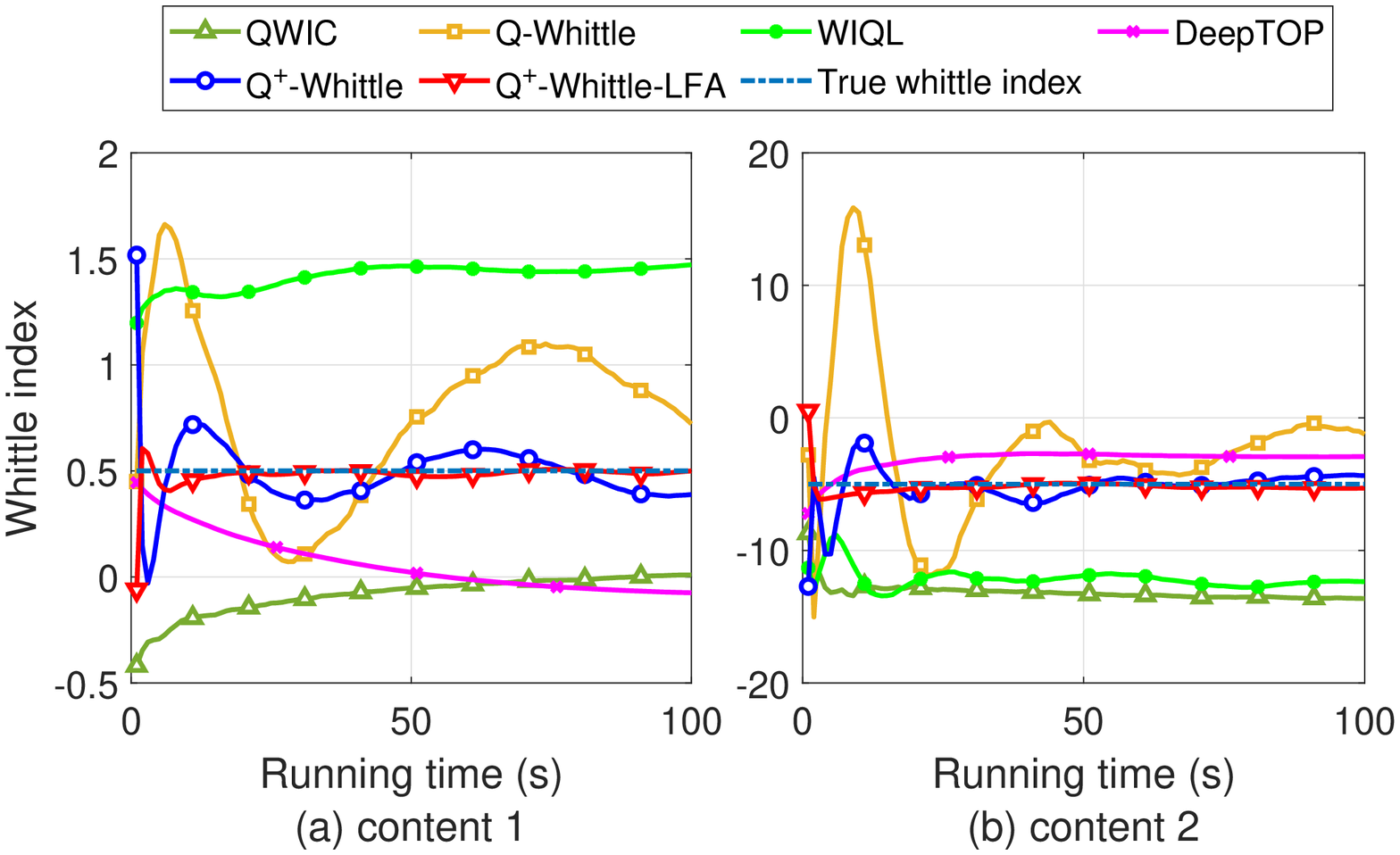}
\caption{Convergence in terms of running time of Whittle index based Q-learning algorithms for two randomly selected contents.}
\label{fig:convergencetime}
\end{minipage}
\vspace{-0.2in}
\end{figure*}

\noindent\textbf{Accumulated Cost (Latency)}. The accumulated costs of above learning based algorithms are presented in  Figure~\ref{fig:learning-cost-synthetic}, where we use the Monte Carlo simulation with $2,000$ independent trails. From Figure~\ref{fig:learning-cost-synthetic}, it is clear that our \qwhittleLFA consistently outperforms its counterparts. In addition, WIQL outperforms QWIC and \qw, which is consistent with the observations made in \cite{biswas2021learn}. Moreover, our \qwhittle and \qwhittleLFA perform close to the Whittle index policy.  This is due to the fact that both leverage the asymptotically optimal Whittle index policy to make decisions for wireless edge caching.  Finally, we remark that \qwhittleLFA is much more computationally efficient compared to \qwhittle, and \qw in \cite{avrachenkov2020whittle}, especially when the state space is large.  This observation is further pronounced when we compare their convergence as illustrated below.

\noindent\textbf{Convergence and Running Time.} We demonstrate the convergence of Whittle index based Q-learning algorithms in terms of the number of iterations in Figure~\ref{fig:convergence}, and in terms of running time in Figure~\ref{fig:convergencetime}.  The running time are obtained via averaging over 2,000 Monte Carlo runs of a single-threaded program on Ryzen 7 5800X3D desktop with 64 GB RAM. In both figures, we randomly draw two contents from the trace with Zipf parameter $0.9$ due to the decoupled nature of our framework (see Section~\ref{sec:learning}). For ease of exposition, we only show results of Whittle indices of two states for these two particular contents.

We observe that the Whittle indices obtained by our \qwhittle and \qwhittleLFA converge to the true Whittle indices, which are obtained under the assumption that system parameters are known.  More importantly, \qwhittleLFA converges much faster than \qwhittle both in terms of iterations and running time, as motivated earlier.  In addition, the \qw in \cite{avrachenkov2020whittle} is provably convergent to the true Whittle index, however, the multi-timescale nature of \qw makes it converges slowly in practice (see discussions in Section~\ref{sec:related}). As shown in Figure~\ref{fig:convergence}, \qw still cannot converge to the true Whittle indices after 10,000 iterations while our \qwhittleLFA converges only after 1,000 iterations.  We note that DeepTOP \cite{nakhleh2022deeptop} also leverages a threshold policy to learn Whittle indices, which converges to the true Whittle indices in a smaller number of iterations as shown in Figure~\ref{fig:convergence} but with a much larger running time as shown in Figure~\ref{fig:convergencetime}.  This is due to its intrinsic nature of training a deep neural network in each iteration for making decisions.  Finally, though QWIC and WIQL may converge, they are not guaranteed to converge to the true Whittle indices as observed in Figure~\ref{fig:convergence}. Similar observations hold for other contents in other traces, and hence are omitted here.

\subsection{Evaluation Using Real Traces}
We further evaluate \qwhittleLFA using two real traces: (i) \emph{Iqiyi} \cite{ma2017understanding}, which contains mobile video behaviors; and (ii) \emph{YouTube} \cite{zink2008watch}, which contains trace data about user requests for specific {YouTube} content collected from a campus network. For the Iqiyi (resp. YouTube) trace, there are more than $67$ (resp. $0.6$) million requests for more than $1.4$ million (resp. $0.3$) unique contents over a period of $335$ (resp. $336$) hours.  We evaluate the accumulated cost over rough $14$ days for each trace with a cache size of $B=4,000$ (resp. $2,000$) for Iqiyi (resp. YouTube).  We choose these values based on the observation of average number of active contents in the traces\footnote{A content is said to be active at time $t$ if $t$ lies between the first and the last requests for the content.}. 
The accumulated costs of the above learning based algorithms are shown in Figure~\ref{fig:learning-cost-real}.  Again, we observe that our \qwhittleLFA  significantly outperforms its counterparts with smaller costs.  Finally, we note that \qwhittleLFA can quickly learn the system dynamics and perform close to the Whittle index policy, which matches well with our theoretical results.

\begin{figure}[t]
\centering
 \includegraphics[width=0.5\textwidth]{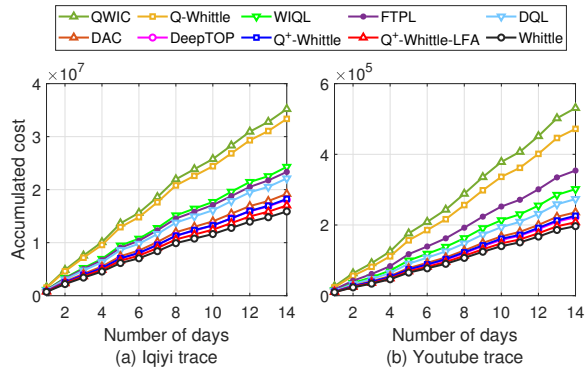}
\caption{Accumulated cost (latency) using real traces. }
\label{fig:learning-cost-real}
\vspace{-0.1in}
\end{figure}

\section{Conclusion}\label{sec:conclusion}
In this paper, we studied the content caching problem at the wireless edge with unreliable channels.  Our goal is to derive an optimal policy for making content caching decisions so as to minimize the average content request latency from end users.  We posed the problem in the form of a Markov decision process, and showed that the optimal policy has a simple threshold-structure and presented a closed form of Whittle indices for each content.  We then developed a novel model-free reinforcement learning algorithm with linear function approximation, which is called \qwhittleLFA that can fully exploit the structure of the optimal policy when the system parameters are unknown.  We mathematically characterized the performance of \qwhittleLFA and also numerically demonstrated its empirical performance.


\appendices

\section{Proof of Propositions in Section \ref{sec:indexable}}\label{Sec_Appendix_Prop}

\subsection{Proof of Proposition 1}\label{app:proof-prop-threshold}
\begin{proof}
According to Assumption~\ref{assumption1}, we denote the smallest state with no preference over active and passive actions as $R$, i.e, $Q^\alpha(R,1)= Q^\alpha(R,0)$. This implies the following two facts. First, for state $s<R$, the optimal action is $0$, i.e., 
\begin{align}\label{eq:fact1}
    J^\alpha(R-1)=R-1-W+\alpha J^\alpha(R).
\end{align}
Second, equal preference over two actions at state $R$ implies
\begin{align*}
   R-W+\alpha J^\alpha(R+1)
     &=R+\alpha P_R J^\alpha(R+1)\nonumber\allowdisplaybreaks\\
    &\qquad+\alpha(1-P_R)J^\alpha(R-1),
\end{align*}
from which we have 
\begin{align}\label{eq:W_equality}
    W=\alpha(1-P_R)(J^\alpha(R+1)-J^\alpha(R-1)).
\end{align}
From \eqref{eq:fact1}, we establish the connection between value functions of states $R-1$ and $R+1$, i.e.,
\begin{align}\label{eq:J_equality1}
    J^\alpha(R-1)=R\!-\!1\!-\!W\!+\!\alpha(R-W+\alpha J^\alpha(R+1)).
\end{align}
Substituting~(\ref{eq:J_equality1}) into \eqref{eq:W_equality}, we have
\begin{align*}
    J^\alpha(R+1)=\frac{\frac{W}{\alpha(1-P_R)}+(R-1-W+\alpha(R-W))}{1-\alpha^2}.
\end{align*}

As a result, $J^\alpha(R+1)$ can be updated as
\begin{align*}
  \begin{cases}
    R+1-W+\alpha J^\alpha(R+2),~ \text{if}~a=0,\\
    R+\!1\!+\alpha P_{R+1} J^\alpha(R+2)+\alpha(1-P_{R+1})J^\alpha(R),~ \text{if}~a=1.
    \end{cases}
\end{align*}

In the following, we show that it is optimal to choose action $1$ at state $R+1$.  We first show that $a=0$ is not optimal by contradiction, and then verify that $a=1$ is optimal. 
Assume that the optimal action at state $R+1$ is $a=0$. Then, we have
\begin{align}\label{eq:contra}
    W&\geq \alpha(1-P_{R+1})(J^\alpha(R+2)-J^\alpha(R))\nonumber\displaybreak[0]\\
    &=\alpha(1-P_{R+1})\Big(\frac{J^\alpha(R+1)-(R+1-W)}{\alpha}\nonumber\allowdisplaybreaks\\
    &\qquad\qquad-(R-W+\alpha J^\alpha(R+1))\Big)\nonumber\displaybreak[1]\\
    &= \frac{1-P_{R+1}}{\alpha(1-P_R)}W,
\end{align}
where the inequality is due to the fact that optimal action is $0$ at state $R+1$ and the last equality directly comes by plugging the closed-form expression of $J^\alpha(R+1)$. Since $\frac{1-P_{R+1}}{\alpha(1-P_R)}>1$, the inequality does not hold and it occurs an contradiction. This means that action $0$ is not optimal for state $R+1$.  We further verify that $a=1$ is optimal.  When the optimal action at state $R+1$ is $a=1$, we have
\begin{align}
    W&\stackrel{(a)}{\leq}\!\!\alpha(1\!-\!P_{R+1})\Bigg(\frac{J^\alpha(R+1)\!-\!(R+1-W)}{\alpha}\nonumber\allowdisplaybreaks\\
    &\qquad\qquad-(R-W+\alpha J^\alpha(R+1))\Bigg)\nonumber\displaybreak[0]\\
    &\stackrel{(b)}{\leq}\!\!\alpha(1\!-\!P_{R+1})\Bigg(\!\!\frac{J^\alpha(R\!+\!1)\!-\!(R\!+\!1)\!-\!\alpha(1\!\!-\!P_{R+1})J^\alpha(R)}{\alpha P_{R+1}}\nonumber\allowdisplaybreaks\\
    &\qquad\qquad-(R-W+\alpha J^\alpha(R+1))\Bigg)\nonumber\displaybreak[3]\\
    &= \alpha(1-P_{R+1})\Big(J^\alpha(R+2)-J^\alpha(R)\Big),
\end{align}
where (a) directly follows from the contradiction implied by \eqref{eq:contra}, and (b) holds as
  $\frac{J^\alpha(R+1)-(R+1)-\alpha(1-P_{R+1})J^\alpha(R)}{\alpha P_{R+1}}\geq \frac{J^\alpha(R+1)-(R+1-W)}{\alpha}. $ 
Thus the optimal action for $R+1$ is $1$. 

Following the same idea, the above results can be easily generalized to any state $s\geq R+1$, and hence we omit the detail here.  To this end, the optimal policy of the discounted MDP~\eqref{eq:discount_value} is of the threshold-type.

\end{proof}

\subsection{Proof of Proposition \ref{Prop:2}}
\begin{proof}
According to \cite{lippman1973semi},  the optimal expected total discounted latency $J_{\pi^*_\alpha}$ under the optimal policy $\pi_\alpha^*$ with discount factor $\alpha,$ and the optimal average latency $J_{\pi^*}$ under the optimal policy $\pi^*$ satisfy
   $\lim_{\alpha\rightarrow 1}(1-\alpha)J^\alpha_{\pi_\alpha^*}(s)=J_{\pi^*}(s), \forall s.$
Since our action set is finite, there exists an optimal stationary policy for the average latency problem such that $\pi_\alpha^*\rightarrow \pi^*$ \cite{lippman1973semi}. This shows that the optimal policy for \eqref{eq:decoupledduallag} is of the threshold-type. 
\end{proof}

\subsection{Proof of Proposition \ref{prop:indexable}}\label{app:proof-prop-indexable}

\begin{proof}
Since the optimal policy for~\eqref{eq:decoupledduallag} is of the threshold-type, for a given $W$, the optimal average cost under a threshold $R$ satisfies  
\begin{align}\label{eq:threshold_cost0}
h(W)\!:=\!\min_R\!\left\{{h^R(W)}\!:=\!\sum_{s=0}^{\infty}s\phi_{R}(s)\!-\!W\!\sum_{s=0}^R \phi_{R}(s)\right\},   
\end{align}
where $\phi_R(s)$ is the stationary probability of state $s$ under the threshold policy $\pi=R$.  It is easy to show that $h^R(W)$ is concave non-increasing in $W$ since it is a lower envelope of linear non-increasing functions in $W$, i.e., $h^R(W)>h^R(W^\prime)$ if $W<W^\prime.$  Thus we can choose a larger threshold $R$ when $W$ increases to further decrease the total cost according to \eqref{eq:threshold_cost0}, i.e, $D(W)\subseteq D(W^\prime)$ when $W<W^\prime$.
\end{proof}

\subsection{Proof of Proposition~\ref{prop:whittle-index-closed}}
\begin{proof}\label{proof:prop:whittle-index-closed}
Following from the definition of Whittle index, the performance of a policy with threshold $R$ equals to the performance of a policy with threshold $R+1$ \cite{larranaga2014index,larrnaaga2016dynamic}, i.e.,
\begin{align}
&\mathbb{E}_R[s] -{W(R)} \mathbb{E}_{R}[\mathds{1}_{\{ {A(s)=0}\}} ]\nonumber\allowdisplaybreaks\\
&
\qquad\qquad=\mathbb{E}_{R+1}[s] -W(R) \mathbb{E}_{R+1}[\mathds{1}_{\{ A(s)=0\}}],
\end{align}
where the subscript denotes the fact that the associated quantities involve a threshold policy with the value of threshold equal to this value. { Since the evolution of per-content is described by the transition kernel (a birth-and-death process) in  \eqref{eq:queuetrans}, we have $\mathbb{E}_R[ \mathds{1}_{\{ A(s)=0\}}]=\sum_{s=0}^R \phi_{R}(s)$.}
\end{proof}

\subsection{Proof of Proposition \ref{prop:stationary-distributions}}\label{app:proof-prop-distribution}
\begin{proof}
Given the transition kernel in \eqref{eq:queuetrans}, the transition rate satisfies $q(S+1|S,0)=q(S+1|S,1)=\lambda$ and $q(S_1|S,0)=0$ for $S\leq R$, and $q(S-1|S,1)=\nu S$ for $S>R$. It is clear that $\forall S<R$ is transient because the state keeps increasing. 
 Therefore, $\phi_R(S)=0,$ 
$\forall S<R$. Note that for threshold state $R$, the stationary probability satisfies
\begin{align*}
    \phi_R(R)=\frac{\nu (R+1)}{\lambda+\nu (R+1)} \phi_R(R+1).
\end{align*}
Based on the birth-and-death process, the stationary probabilities for states $R+l, \forall l=2, \cdots, S_{max}-R$ satisfy
\begin{align*}
   \phi_R(R+l)\frac{\lambda}{\lambda+\nu (R+l)}= \frac{\nu (R+l+1)}{\lambda+\nu (R+l+1)}\phi_R(R+l+1). 
\end{align*}
Therefore, we have the following relation
\begin{align*}
    \phi_R(R+l)&=\phi_{R}(R+1)\prod_{j=2}^l \frac{\lambda}{\lambda+\nu (R+j-1)}\frac{\lambda+\nu (R+j)}{\nu q(R+j)}.
\end{align*}
Since $\phi_R(R)+\phi_R(R+1)+\cdots+\phi_R(S_{max})=1$, we have  
 \begin{align*}
 \phi_R(R+1)&=1/\Bigg(1+\frac{\nu (R+1)}{\lambda+\nu (R+1)}\nonumber\allowdisplaybreaks\\
 &+\sum\limits_{l=2}^{S_{max}-R}\prod_{j=2}^l \frac{\lambda}{\lambda+\nu (R+j-1)}\frac{\lambda+\nu (R+j)}{\nu (R+j)}\Bigg).
 \end{align*}
\end{proof}

\section{Proof of Theorem \ref{thm:convergence}}\label{sec:QWhittle-learning-convergence-app}
To prove Theorem \ref{thm:convergence}, we need the following three key lemmas regarding the error terms defined in \eqref{eq:residual}. First, we study the property of $\tilde{\theta}_n$.
 
\begin{lemma}\label{lemma2}
Consider the iterates $\{{\theta}_n\}$ and $\{{W}_n\}$ generated by \eqref{eq:Q_2TSA}-\eqref{eq:W_2TSA}. Under 
Assumptions~\ref{assumption:Lipschitz}-\ref{assumption:stepsize}, we have for all $n\geq 0$,
\begin{align}\nonumber
     &\mathbb{E} \left[\|\tilde{\theta}_{n+1}\|^2|\mathcal{F}_n\right]\leq \gamma_n^2\Lambda+(1-2\gamma_n\mu_1+L_h^2\gamma_n^2)\|\tilde{{\theta}}_n\|^2\nonumber\displaybreak[0]\\
    &+2L_f^2L_g^2\eta_n^2\|\tilde{\theta}_n\|^2+2L_f^2L_g^2(L_f+1)^2\eta_n^2\|\tilde{W}_n\|^2\nonumber\displaybreak[1]\\
    &+\left(L_f^2\gamma_n^2+\frac{2(1+L_h\gamma_n)^2\eta_n^2L_g^2}{\gamma_n^2}\right)\|\tilde{{\theta}}_n\|^2\nonumber\allowdisplaybreaks\\
    &+\frac{2(1+L_h\gamma_n)^2\eta_n^2L_g^2(L_f+1)^2}{\gamma_n^2}\|\tilde{{W}}_n\|^2.\label{eq:Q_tilde}
\end{align}
\end{lemma}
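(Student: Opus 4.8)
The plan is to bound $\mathbb{E}[\|\tilde{\theta}_{n+1}\|^2|\mathcal{F}_n]$ by starting from the definition $\tilde{\theta}_n := \theta_n - f(W_n)$ and tracking how $f(W_n)$ moves as $W_n$ updates. First I would write $\tilde{\theta}_{n+1} = \theta_{n+1} - f(W_{n+1}) = \tilde{\theta}_n + (\theta_{n+1}-\theta_n) - (f(W_{n+1})-f(W_n))$. Using the $\theta$-update \eqref{eq:Q_2TSA}, the increment $\theta_{n+1}-\theta_n = \gamma_n[h(\theta_n,W_n)+\xi_{n+1}]$; and since $h(f(W_n),W_n)=0$ by Assumption~\ref{assumption:Lipschitz}, I can write $h(\theta_n,W_n) = h(\theta_n,W_n)-h(f(W_n),W_n)$, which the Lipschitz and strong-stability properties control. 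I would then expand $\|\tilde{\theta}_{n+1}\|^2$, take conditional expectation, and use $\mathbb{E}[\xi_{n+1}|\mathcal{F}_n]=0$ (Assumption~\ref{assumption:mean-variance}) to kill the cross term involving the noise, leaving $\gamma_n^2\mathbb{E}[\|\xi_{n+1}\|^2]\le \gamma_n^2\Lambda$ as the variance contribution.

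The core of the argument is a clean decomposition of $\|\tilde{\theta}_{n+1}\|^2$ into three groups: the ``drift'' term $\|\tilde{\theta}_n + \gamma_n h(\theta_n,W_n)\|^2$, the noise term, and the term tracking the drift of the fixed point $f(W_{n+1})-f(W_n)$. For the drift term I would expand $\|\tilde{\theta}_n\|^2 + 2\gamma_n \tilde{\theta}_n^\intercal h(\theta_n,W_n) + \gamma_n^2\|h(\theta_n,W_n)\|^2$ and invoke Assumption~\ref{assump:stable}, which gives $\tilde{\theta}_n^\intercal h(\theta_n,W_n)\le -\mu_1\|\tilde{\theta}_n\|^2$, producing the contraction factor $(1-2\gamma_n\mu_1)$, together with $\|h(\theta_n,W_n)\|\le L_h\|\tilde{\theta}_n\|$ from Lipschitzness yielding the $L_h^2\gamma_n^2\|\tilde{\theta}_n\|^2$ term. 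For the fixed-point drift, I would bound $\|f(W_{n+1})-f(W_n)\|\le L_f\|W_{n+1}-W_n\| = L_f\eta_n\|g(\theta_n,W_n)\|$ using the Lipschitzness of $f$ and the $W$-update (recall $\psi_n=0$), and then bound $\|g(\theta_n,W_n)\|$ via $L_g$ after writing $g(\theta_n,W_n)=g(\theta_n,W_n)-g(f(W_n),W(R))$ plus a vanishing term, so that it splits into pieces proportional to $\|\tilde{\theta}_n\|$ and $\|\tilde{W}_n\|$. This is what generates the $\eta_n^2 L_f^2 L_g^2$ coefficients on both $\|\tilde{\theta}_n\|^2$ and $(L_f+1)^2\|\tilde{W}_n\|^2$.

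The slightly delicate cross term is $2(\theta_{n+1}-\theta_n)^\intercal$ paired against $-(f(W_{n+1})-f(W_n))$: after removing the mean-zero noise, the surviving part is $2\gamma_n h(\theta_n,W_n)$ dotted with the $f$-increment, and each factor must be bounded by $L_h\|\tilde{\theta}_n\|$ and by $L_f\eta_n\|g\|$ respectively. Applying $2ab\le a^2+b^2$ or a weighted Young's inequality here produces terms of order $\eta_n^2/\gamma_n^2$ — precisely the $\frac{2(1+L_h\gamma_n)^2\eta_n^2 L_g^2}{\gamma_n^2}$ factors in the statement, where the $(1+L_h\gamma_n)$ comes from grouping $\tilde{\theta}_n+\gamma_n h(\theta_n,W_n)$ whose norm is at most $(1+L_h\gamma_n)\|\tilde{\theta}_n\|$. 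The main obstacle will be the careful bookkeeping of these cross terms so that the resulting coefficients match \eqref{eq:Q_tilde} exactly: one must consistently re-expand $g$ around $(f(W_n),W(R))$ and decide which pieces to absorb into the $\|\tilde{\theta}_n\|^2$ versus $\|\tilde{W}_n\|^2$ buckets, and track the $\eta_n^2/\gamma_n^2$ scaling that is characteristic of two-timescale analyses. Because we do \emph{not} assume strong monotonicity of $g$ (unlike \cite{doan2020nonlinear}), I would lean entirely on Assumption~\ref{assump:stable} and the Lipschitz bounds rather than on any contraction in $g$, which is where the re-derivation diverges from the reference and requires the most care.
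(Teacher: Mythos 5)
Your proposal is correct and follows essentially the same route as the paper's proof: the same decomposition of $\|\tilde{\theta}_{n+1}\|^2$ into a drift term $\|\tilde{\theta}_n+\gamma_n h(\theta_n,W_n)\|^2$, a noise term, the fixed-point drift $f(W_n)-f(W_{n+1})$, and the cross term grouped as $\tilde{\theta}_n+\gamma_n h$, with Assumption~\ref{assump:stable} giving the $(1-2\gamma_n\mu_1)$ contraction, $h(f(W_n),W_n)=0$ plus Lipschitzness giving $L_h^2\gamma_n^2\|\tilde{\theta}_n\|^2$, the expansion of $g$ around the equilibrium producing the $\eta_n^2 L_f^2 L_g^2$ pieces, and Young's inequality on the cross term producing exactly the $(1+L_h\gamma_n)^2\eta_n^2/\gamma_n^2$ coefficients in \eqref{eq:Q_tilde}. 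No gaps beyond the bookkeeping you already flag.
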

\begin{proof}
According to the definition in \eqref{eq:residual}, we have 
\begin{align*}
    \tilde{\theta}_{n+1}&=\theta_{n+1}-f(W_{n+1})\nonumber\displaybreak[0]\\
    &=\tilde{\theta}_n+\gamma_nh(\theta_n,W_n)+\gamma_n\xi_n+f(W_n)-f(W_{n+1}),
\end{align*}
which leads to
\begin{align}
    &\|\tilde{\theta}_{n+1}\|^2=\|\tilde{\theta}_n\!+\!\gamma_nh(\theta_n,W_n)\!+\!\gamma_n\xi_n+f(W_n)\!-\!f(W_{n+1})\|^2\nonumber\displaybreak[0]\\
    &=\underset{\text{Term}_1}{\underbrace{\|\tilde{\theta}_n+\gamma_nh(\theta_n,W_n)\|^2}}+\underset{\text{Term}_2}{\underbrace{\|\gamma_n\xi_n+f(W_n)-f(W_{n+1})\|^2}}\nonumber\displaybreak[1]\\
    &\qquad+\underset{\text{Term}_3}{\underbrace{2\left(\tilde{\theta}_n+\gamma_n h(\theta_n,W_n)\right)^T\left(f(W_n)-f(W_{n+1})\right)}}\nonumber\displaybreak[2]\\
    &\qquad+\underset{\text{Term}_4}{\underbrace{2\gamma_n\left(\tilde{\theta}_n+\gamma_n h(\theta_n,W_n)\right)^T\xi_k}},
\end{align}
where the second equality is due to the fact that $\|\bold{x}+\bold{y}\|^2=\|\bold{x}\|^2+\|\bold{y}\|^2+2\bold{x}^T\bold{y}$.

We next analyze the conditional expectation of each term in $\|\tilde{\theta}_{n+1}\|^2$ on $\mathcal{F}_n$. We first focus on Term$_1$.

\begin{align*}
    &\mathbb{E}\Big[\text{Term}_1|\mathcal{F}_n\Big]\nonumber\allowdisplaybreaks\\
    &=\|\tilde{{\theta}}_k\|^2+2\gamma_n\tilde{{\theta}}_n^\intercal h({\theta}_n,{W}_n)+\|\gamma_nh({\theta}_n,{W}_n)\|^2\nonumber\displaybreak[1]\\
    &\overset{(a1)}{=}\!\!\|\tilde{{\theta}}_n\|^2\!\!\!+\!2\gamma_n\tilde{{\theta}}_n^\intercal h({\theta}_n,{W}_n\!)\!+\!\gamma_n^2\|h({\theta}_n,{W}_n\!)\!-\!h(f({W}_n),{W}_n)\|^2\nonumber\displaybreak[3]\\
    &\overset{(a2)}{\leq} \|\tilde{{\theta}}_n\|^2-2\gamma_n\mu_1\|\tilde{{\theta}}_n\|^2+L_h^2\gamma_n^2\|\tilde{{\theta}}_n\|^2,
\end{align*}
where (a1) follows from $h(f(W_n), W_n)=0$, and (a2) holds due to the Lipschitz continuity of $h$ in Assumption~\ref{assumption:Lipschitz} and $\gamma_n\tilde{\theta}_n^Th(\theta_n,W_n)\leq -\mu_1\|\tilde{\theta}_n\|^2$.  For Term$_2$, we have

\begin{align}
    &\mathbb{E}\Big[\text{Term}_2|\mathcal{F}_n\Big]\nonumber\allowdisplaybreaks\\
    &=\mathbb{E}[\|f(W_n)-f(W_{n+1})+\gamma_n\xi_n\|^2|\mathcal{F}_n]\nonumber\\
    &\overset{(b1)}{=}\mathbb{E}[\|f(W_n)-f(W_{n+1})\|^2|\mathcal{F}_n]+\gamma_n^2\mathbb{E}[\|\xi_n\|^2|\mathcal{F}_n]\nonumber\\
    &\overset{(b2)}{\leq} L_f^2\mathbb{E}[\|W_n-W_{n+1}\|^2|\mathcal{F}_n]+\gamma_n^2\Lambda\nonumber\\
    &=L_f^2\mathbb{E}[\|\eta_n g(\theta_n,W_n)\|^2|\mathcal{F}_n]+\gamma_n^2\Lambda\nonumber\\
    &=L_f^2\eta_n^2\|g(\theta_n,W_n)\|^2+\gamma_n^2\Lambda\nonumber\\
    &\overset{(b3)}{\leq} 2L_f^2\eta_n^2\|g(\theta_n,W_n)-g(f(W_n),W_n)\|^2+\gamma_n^2\Lambda \nonumber\displaybreak[0]\\
    &\qquad+2L_f^2\eta_n^2\|g(f(W_n),W_n)-g(f(W(R)),W(R))\|^2\nonumber\displaybreak[1]\\
    &\overset{(b4)}{\leq} 2L_g^2L_f^2\eta_n^2\|\tilde{\theta}_n\|^2+2L_g^2L_f^2\eta_n^2\Big(\|f(W_n)-f(W(R))\|\nonumber\displaybreak[2]\\
    &\quad+\|W_n-W(R)\|\Big)^2+\gamma_n^2\Lambda\nonumber\displaybreak[3]\\
    &\overset{(b5)}{\leq} 2L_f^2L_g^2\eta_n^2\|\tilde{\theta}_n\|^2\!\!\!+\!2L_f^2L_g^2(L_f\!+\!1)^2\eta_n^2\|\tilde{W}_n\|^2\!+\!\gamma_n^2\Lambda,
\end{align}
where (b1) is due to $\mathbb{E}[\xi_n|\mathcal{F}_n]=0,$ (b2) is due to the Lipschitz continuity of $f$,
and (b3) holds since $ \|g(\theta_n,W_n)\|^2\leq 2\|g(\theta_n,W_n)-g(f(W_n),W_n)\|^2+2\|g(f(W_n),W_n)-g(f(W(R)),W(R))\|^2$ when $g(f(W_n), W_n)=0,$ (b4) and (b5) hold because of the Lipschitz continuity of $g$ and $f$.
Next, we have the conditional expectation of Term$_3$ as
\begin{align}
&\mathbb{E}\Big[\text{Term}_3|\mathcal{F}_n\Big]\nonumber\allowdisplaybreaks\\
   &\leq2\mathbb{E}\|\tilde{\theta}_n+\gamma_nh(\theta_n,\!W_n)\|\cdot\|f(W_n)-f(W_{n+1})\|\nonumber\displaybreak[0]\\
    &\overset{(c1)}{\leq} 2L_f\eta_n\|\tilde{\theta}_n+\gamma_nh(\theta_n,W_n)\|\cdot\|g(\theta_n, W_n))\|\nonumber\displaybreak[1]\\
    &\leq 2L_f\eta_n(1+L_h\gamma_n)\|\tilde{\theta}_n\|\left(L_g\|\tilde{\theta}_n\|+L_g(L_f+1)\|\tilde{W}_n\|\right)\nonumber\displaybreak[2]\\
     &\overset{(c2)}{\leq} L_f^2\gamma_n^2\|\tilde{{\theta}}_n\|^2+ \frac{(1+L_h\gamma_n)^2\eta_n^2}{\gamma_n^2}\nonumber\allowdisplaybreaks\\
     &\qquad\qquad\cdot\left(L_g\|\tilde{{\theta}}_n\|^2+L_g(L_f+1)\|\tilde{{W}}_n\|\right)^2\nonumber\displaybreak[0]\\
    &\leq \left(L_f^2\gamma_n^2+\frac{2(1+L_h\gamma_n)^2\eta_n^2L_g^2}{\gamma_n^2}\right)\|\tilde{{\theta}}_n\|^2\nonumber\allowdisplaybreaks\\
    &\qquad+\frac{2(1+L_h\gamma_n)^2\eta_n^2L_g^2(L_f+1)^2}{\gamma_n^2}\|\tilde{{W}}_n\|^2,
\end{align}
where (c1) is due to the Lipschitz continuity of $f$ and (c2) holds because $2\bold{x}^T\bold{y}\leq \beta\|\bold{x}\|^2+1/\beta\|\bold{y}\|^2, \forall \beta>0$.
Since $\mathbb{E}\Big[\text{Term}_4|\mathcal{F}_n\Big]=0$, combining all terms leads to the final expression in \eqref{eq:Q_tilde}.
\end{proof}

\begin{lemma}\label{lemma3}
Consider the iterates $\{{\theta}_n\}$ and $\{{W}_n\}$ generated by \eqref{eq:Q_2TSA}-\eqref{eq:W_2TSA}. Under 
Assumptions~\ref{assumption:Lipschitz}-\ref{assumption:stepsize}, for any $n\geq 0$, we have 
\begin{align}
  \mathbb{E} &\left[\|\tilde{W}_{n+1}\|^2|\mathcal{F}_n\right]{\leq} \|\tilde{W}_{n}\|^2+2\eta_n^2L_g^2\|\tilde{Q}_n\|^2\nonumber\allowdisplaybreaks\\
  &\qquad\qquad+2\eta_n^2L_g^2(L_h+1)^2\|\tilde{W}_n\|^2. \label{eq:w_tilde}
\end{align}
\end{lemma}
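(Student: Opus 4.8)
The plan is to bound $\|\tilde{W}_{n+1}\|^2$ directly from the slower-timescale update \eqref{eq:W_2TSA}, mirroring the structure of the proof of Lemma~\ref{lemma2} but exploiting the simpler form of the $W$-iteration. First I would write $\tilde{W}_{n+1}=W_{n+1}-W(R)=\tilde{W}_n+\eta_n g(\theta_n,W_n)$, using that the error sequence $\psi_{n+1}=0$ (as noted in the text, the $W$-update in \eqref{eq:lambda_update} is deterministic). Squaring gives
\begin{align*}
\|\tilde{W}_{n+1}\|^2=\|\tilde{W}_n\|^2+2\eta_n\tilde{W}_n^\intercal g(\theta_n,W_n)+\eta_n^2\|g(\theta_n,W_n)\|^2.
\end{align*}

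Next I would handle the two nontrivial terms. For the cross term $2\eta_n\tilde{W}_n^\intercal g(\theta_n,W_n)$, the natural move is to invoke the stability condition in Assumption~\ref{assump:stable}, which gives $\tilde{W}g(\theta,W)\le -\mu_2\|\tilde{W}\|^2$, rendering this term nonpositive so it can be dropped. (If instead one wanted to keep the bound structurally uniform with the stated right-hand side, one would bound it via Cauchy–Schwarz and Young's inequality, but discarding it is cleanest and still consistent with the claimed inequality.) For the quadratic term $\eta_n^2\|g(\theta_n,W_n)\|^2$, I would insert the zero-value reference point. Since $g(f(W(R)),W(R))=0$ at the equilibrium, write $\|g(\theta_n,W_n)\|^2=\|g(\theta_n,W_n)-g(f(W(R)),W(R))\|^2$ and apply the Lipschitz continuity of $g$ from Assumption~\ref{assumption:Lipschitz}, obtaining a bound of $L_g^2(\|\theta_n-f(W(R))\|+\|W_n-W(R)\|)^2$.

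The remaining step is to convert $\|\theta_n-f(W(R))\|$ into the two natural error quantities $\|\tilde{\theta}_n\|$ and $\|\tilde{W}_n\|$. By the triangle inequality, $\|\theta_n-f(W(R))\|\le\|\theta_n-f(W_n)\|+\|f(W_n)-f(W(R))\|=\|\tilde{\theta}_n\|+\|f(W_n)-f(W(R))\|$, and then the Lipschitz continuity of $f$ gives $\|f(W_n)-f(W(R))\|\le L_f\|\tilde{W}_n\|$. Substituting and using $(x+y)^2\le 2x^2+2y^2$ collapses everything into a multiple of $\|\tilde{\theta}_n\|^2$ and $\|\tilde{W}_n\|^2$, producing the factors $2\eta_n^2L_g^2\|\tilde{\theta}_n\|^2$ and a $\tilde{W}_n$ coefficient of the form $2\eta_n^2L_g^2(1+L_f)^2$. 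I expect the main obstacle to be purely bookkeeping: matching the precise constant on the $\|\tilde{W}_n\|^2$ term to the stated $(L_h+1)^2$ rather than $(L_f+1)^2$, which suggests tracking carefully which Lipschitz constant governs the $f(W_n)-f(W(R))$ step (I would expect $L_f$ here, so the $(L_h+1)^2$ in the statement appears to be a typo to be reconciled) and deciding exactly how the cross term is treated. There is no deep analytic difficulty—the estimate follows entirely from the Lipschitz and stability assumptions—so the care lies in the constant propagation and in correctly applying the $\psi_{n+1}=0$ simplification.
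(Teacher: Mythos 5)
Your proposal is correct and takes essentially the same route as the paper's proof: expand $\|\tilde{W}_{n+1}\|^2=\|\tilde{W}_n+\eta_n g(\theta_n,W_n)\|^2$, handle the cross term via Assumption~\ref{assump:stable} (the paper retains the resulting $-2\eta_n\mu_2\|\tilde{W}_n\|^2$ in its display and then drops it in the lemma statement, while you drop it immediately—both valid), and bound $\eta_n^2\|g(\theta_n,W_n)\|^2$ using Lipschitz continuity of $g$ and $f$ anchored at the equilibrium where $g(f(W(R)),W(R))=0$ (the paper splits through the intermediate point $g(f(W_n),W_n)$, you go directly via the triangle inequality, but the constants come out identically). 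Your typo diagnosis is confirmed by the paper's own proof, which indeed derives $2\eta_n^2L_g^2\|\tilde{\theta}_n\|^2+2\eta_n^2L_g^2(L_f+1)^2\|\tilde{W}_n\|^2$, so the $\|\tilde{Q}_n\|^2$ and $(L_h+1)^2$ appearing in the lemma statement should read $\|\tilde{\theta}_n\|^2$ and $(L_f+1)^2$.
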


\begin{proof}
According to \eqref{eq:residual}, we have 
    $\tilde{W}_{n+1}=W_{n+1}-W(R)=\tilde{W}_n+\eta_ng(\theta_n,W_n),$
which leads to 
\begin{align}
    &\mathbb{E}\left[\|\tilde{W}_{n+1}\|^2|\mathcal{F}_n\right]\nonumber\allowdisplaybreaks\\
    &=\|\tilde{W}_n\|^2+2\eta_n\tilde{W}_n^Tg(\theta_n, W_n)+\eta_n^2\|g(\theta_n,W_n)\|^2\nonumber\displaybreak[1]\\
    &\overset{(d1)}{\leq} \|\tilde{W}_n\|^2-2\eta_n\mu_2\|\tilde{W}_n\|^2+\eta_n^2\|g(\theta_n,W_n)|^2\nonumber\displaybreak[2]\\
    &\overset{(d2)}{\leq} \|\tilde{W}_{n}\|^2-2\eta_n\mu_2\|\tilde{W}_n\|^2\nonumber\allowdisplaybreaks\\
    &\qquad+2\eta_n^2L_g^2\|\tilde{\theta}_n\|^2+2\eta_n^2L_g^2(L_f+1)^2\|\tilde{W}_n\|^2,
\end{align}
where (d1) is due to $2\eta_n\tilde{W}_n^T(g(\theta_n, W_n))\leq -2\mu_2\|\tilde{W}_n\|^2$ and { (d2) is due to (b3)-(b5).} 
\end{proof}

\begin{lemma}\label{lemma4}
Consider the iterates $\{{\theta}_n\}$ and $\{{W}_n\}$ generated by \eqref{eq:Q_2TSA}-\eqref{eq:W_2TSA}. 
Assume that $\gamma_n\leq\min\left( \frac{1}{2\mu_1},\frac{2\mu_1}{L_h^2+L_f^2}\right)$, $\eta_n\leq \min\left( \frac{1}{2\mu_2}, \frac{\mu_2}{L_g^2(L_f+1)^2(L_f^2+1)}\right)$ and $\eta_n\ll\gamma_n$. Then under Assumptions~\ref{assumption:Lipschitz}-\ref{assumption:stepsize}, 
we have 
\begin{align*}
    \lim_{n\rightarrow\infty}\mathbb{E}\left[\|\tilde{{\theta}}_n\|^2+\|\tilde{{W}}_n\|^2|\mathcal{F}_n\right]\rightarrow 0~ almost~ surely.
\end{align*}

\end{lemma}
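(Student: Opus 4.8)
The plan is to combine the two one-step bounds from Lemmas~\ref{lemma2} and~\ref{lemma3} into a single drift inequality for the Lyapunov function $M(\theta_n,W_n)=\frac{\eta_n}{\gamma_n}\|\tilde\theta_n\|^2+\|\tilde W_n\|^2$ of~\eqref{eq:lyapunov-function}, and then invoke the Robbins--Siegmund almost-sure supermartingale convergence theorem~\cite{borkar2009stochastic}. Writing $w_n:=\eta_n/\gamma_n$, which is nonnegative and non-increasing by Assumption~\ref{assumption:stepsize}, I would form $\mathbb{E}[M(\theta_{n+1},W_{n+1})|\mathcal{F}_n]=w_{n+1}\mathbb{E}[\|\tilde\theta_{n+1}\|^2|\mathcal{F}_n]+\mathbb{E}[\|\tilde W_{n+1}\|^2|\mathcal{F}_n]$ and substitute the right-hand sides of~\eqref{eq:Q_tilde} and~\eqref{eq:w_tilde}, using $w_{n+1}\le w_n$ to keep the weighting explicit.

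The crux is a bookkeeping step showing that the leading negative drifts dominate every other term for $n$ large. On the fast component the prefactor $w_n$ converts $-2\mu_1\gamma_n\|\tilde\theta_n\|^2$ into $-2\mu_1\eta_n\|\tilde\theta_n\|^2$, and the explicit step-size caps in the hypothesis ($\gamma_n\le 2\mu_1/(L_h^2+L_f^2)$, $\eta_n\le \mu_2/(L_g^2(L_f+1)^2(L_f^2+1))$) are exactly what absorb $(L_h^2+L_f^2)\gamma_n^2$ and $2\eta_n^2 L_g^2(L_f+1)^2$ into the diagonal drifts, rendering the weighted $\|\tilde\theta_n\|^2$ coefficient at most $w_n(1-\mu_1\gamma_n)$ and the $\|\tilde W_n\|^2$ coefficient at most $1-\mu_2\eta_n$ for $n\ge n_0$. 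The genuinely two-timescale residuals—the $O(\eta_n^2/\gamma_n^2)\|\tilde W_n\|^2$ term of~\eqref{eq:Q_tilde}, which after weighting becomes $O(\eta_n^3/\gamma_n^3)\|\tilde W_n\|^2$, and the $O(\eta_n^2)\|\tilde\theta_n\|^2$ coupling of~\eqref{eq:w_tilde}—are each $o(\eta_n)$ relative to their diagonal drift under the chosen rates $\gamma_n=\gamma_0(n+1)^{-5/9}$, $\eta_n=\eta_0(n+1)^{-10/9}$, and so are absorbed; the noise collapses to $w_{n+1}\gamma_n^2\Lambda\le\eta_n\gamma_n\Lambda$. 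Since $\gamma_n\gg\eta_n$, one arrives at $\mathbb{E}[M(\theta_{n+1},W_{n+1})|\mathcal{F}_n]\le(1-\tfrac{\mu_2}{2}\eta_n)M(\theta_n,W_n)+\eta_n\gamma_n\Lambda$ for all $n\ge n_0$, with $\sum_n\eta_n\gamma_n\le\tfrac12\sum_n(\eta_n^2+\gamma_n^2)<\infty$.

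Robbins--Siegmund then yields that $M(\theta_n,W_n)$ converges almost surely to a finite limit and that $\sum_n\eta_n M(\theta_n,W_n)<\infty$ a.s.; since $\sum_n\eta_n=\infty$ by Assumption~\ref{assumption:stepsize}, this limit must be $0$, so $M(\theta_n,W_n)\to0$ a.s. This immediately gives $\|\tilde W_n\|^2\le M(\theta_n,W_n)\to0$ a.s., settling convergence of the slow iterate.

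The remaining—and, I expect, most delicate—point is that $M(\theta_n,W_n)\to0$ only controls $w_n\|\tilde\theta_n\|^2$, and since $w_n\to0$ this does \emph{not} by itself force $\|\tilde\theta_n\|^2\to0$; a single pass on $M$ under-weights the fast variable. To close this gap I would return to~\eqref{eq:Q_tilde} in isolation, now using that $\|\tilde W_n\|^2$ is a.s. bounded because it converges. For $n$ large the diagonal coefficient there is at most $1-\mu_1\gamma_n$, and the residuals $O(\eta_n^2/\gamma_n^2)\|\tilde W_n\|^2+\gamma_n^2\Lambda$ are summable since $\sum_n\eta_n^2/\gamma_n^2<\infty$ and $\sum_n\gamma_n^2<\infty$ for the stated rates; a second application of Robbins--Siegmund, with decrement $\mu_1\gamma_n\|\tilde\theta_n\|^2$ and $\sum_n\gamma_n=\infty$, gives $\|\tilde\theta_n\|^2\to0$ a.s. Combining the two limits yields $\|\tilde\theta_n\|^2+\|\tilde W_n\|^2\to0$ almost surely. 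The principal obstacles are thus (i) the term-by-term domination argument, in which the $\eta_n^2/\gamma_n^2$ contributions are tightest and dictate precisely why the separation $\eta_n=o(\gamma_n)$ together with summable $\eta_n^2/\gamma_n^2$ is required, and (ii) decoupling the fast variable, which forces the two-stage use of Robbins--Siegmund rather than a single drift argument on $M$.
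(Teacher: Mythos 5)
Your proposal is correct in substance but takes a genuinely different route from the paper. The paper never forms the weighted Lyapunov function $M$ inside Lemma~\ref{lemma4}: it adds the bounds of Lemmas~\ref{lemma2} and~\ref{lemma3} on the \emph{unweighted} sum $\|\tilde{\theta}_n\|^2+\|\tilde{W}_n\|^2$, so that both negative drifts ($-2\mu_1\gamma_n$ on the fast component, $-2\mu_2\eta_n$ on the slow one) survive at full strength; the step-size caps force the two net coefficients $D_1,D_2$ into $[-1,0]$, and the recursion $y_{n+1}\le(1+x_n)y_n+\gamma_n^2\Lambda$ with $x_n=\min(D_1,D_2)$ is unrolled directly, with $\gamma_n\to 0$ killing the noise term. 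Because the $\theta$-drift is never down-weighted by $\eta_n/\gamma_n$, the paper needs no analogue of your second pass --- the decoupling problem you correctly identify (that $M_n\to 0$ only controls $w_n\|\tilde{\theta}_n\|^2$ with $w_n\to 0$) simply does not arise, and your two-stage Robbins--Siegmund bootstrap, while sound, is machinery the paper's choice of potential renders unnecessary. What your route buys in exchange is a cleaner tail treatment: the paper's unrolled bound, which places $\gamma_n^2\Lambda$ rather than the full convolution $\sum_t\prod_{\tau>t}(1+x_\tau)\gamma_t^2\Lambda$ in the error term, is loose, whereas Robbins--Siegmund handles the summable perturbations ($\sum_n\eta_n\gamma_n<\infty$ by your AM--GM observation) rigorously. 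One blemish: inside Lemma~\ref{lemma4} you should not invoke the Theorem~\ref{thm:convergence} rates $\gamma_n=\gamma_0(n+1)^{-5/9}$, $\eta_n=\eta_0(n+1)^{-10/9}$, since those give $\sum_n\eta_n<\infty$ and would break your own final step, which needs $\sum_n\eta_n=\infty$ from Assumption~\ref{assumption:stepsize}; you must instead absorb the $\eta_n^3/\gamma_n^3$ and $\eta_n^2/\gamma_n^2$ residuals using only the assumed separation $\eta_n\ll\gamma_n$, exactly as the paper does (admittedly informally --- this looseness, like the tension between Theorem~\ref{thm:convergence}'s rates and Assumption~\ref{assumption:stepsize}, is inherited from the paper itself and not a defect specific to your argument).
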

\begin{proof}
Providing Lemma~\ref{lemma2} and Lemma~\ref{lemma3},  we have
\begin{align*}
    &\mathbb{E}\left[\|\tilde{{\theta}}_n\|^2+\|\tilde{{W}}_n\Big\|^2|\mathcal{F}_n\right]\nonumber\allowdisplaybreaks\\
    &\leq\gamma_n^2\Lambda+(1-2\gamma_n\mu_1+L_h^2\gamma_n^2)\|\tilde{{\theta}}_n\|^2\nonumber\allowdisplaybreaks\\
    &+\Big(2L_f^2L_g^2\eta_n^2\|\tilde{{\theta}}_n\|^2+2L_f^2L_g^2\eta_n^2(L_f+1)^2\|\tilde{{W}}_n\|^2\Big)\nonumber\displaybreak[0]\\
    &+\left(L_f^2\gamma_n^2+\frac{2(1+L_h\gamma_n)^2\eta_n^2L_g^2}{\gamma_n^2}\right)\|\tilde{{\theta}}_n\|^2\nonumber\allowdisplaybreaks\\
    &+\frac{2(1+L_h\gamma_n)^2\eta_n^2L_g^2(L_f+1)^2}{\gamma_n^2}\|\tilde{{W}}_n\|^2\nonumber\displaybreak[1]\\
    &+\!(1\!-\!2\eta_n\mu_2)\|\tilde{{W}}_{n}\|^2\!\!\!+\!2\eta_n^2L_g^2\|\tilde{{\theta}}_n\|^2\!\!\!+\!2\eta_n^2L_g^2(L_f\!+\!1)^2\|\tilde{{W}}_n\|^2\nonumber\displaybreak[2]\\
    & \leq\gamma_n^2\Lambda+(1-2\gamma_n\mu_1)\|\tilde{{\theta}}_n\|^2+(1-2\eta_n\mu_2)\|\tilde{{W}}_{n}\|^2\displaybreak[3]\\
    &+\!\! \Bigg(\!\!L_h^2\gamma_n^2\!+\!2L_f^2L_g^2\eta_n^2\!+\!L_f^2\gamma_n^2\!+\!\frac{2(1+L_h\gamma_n)^2\eta_n^2L_g^2}{\gamma_n^2}\nonumber\allowdisplaybreaks\\
    &\qquad\qquad\qquad+\!2\eta_n^2L_g^2\Bigg)\|\tilde{{\theta}}_n\|^2\\
    &+2L_g^2(L_f+1)^2\left(L_f^2\eta_n^2+\eta_n^2+\frac{(1+L_h\gamma_n)^2\eta_n^2}{\gamma_n^2}\right)\|\tilde{{W}}_{n}\|^2.
\end{align*}
Since $\eta_n\leq \min\left( \frac{1}{2\mu_2}, \frac{\mu_2}{L_g^2(L_f+1)^2(L_f^2+1)}\right)$, $\gamma_n\leq\min\left( \frac{1}{2\mu_1},\frac{2\mu_1}{L_h^2+L_f^2}\right)$  and $\eta_n\ll\gamma_n$, we have
\begin{align*}
    -1\leq D_1:&=-2\gamma_n\mu_1+\Bigg(L_h^2\gamma_n^2+2L_f^2L_g^2\eta_n^2+L_f^2\gamma_n^2\allowdisplaybreaks\\
    &+\frac{2(1+L_h\gamma_n)^2\eta_n^2L_g^2}{\gamma_n^2}+2\eta_n^2L_g^2\Bigg)\leq 0,\\
    -1\leq D_2:&=-2\eta_n\mu_2+2L_g^2(L_f+1)^2\allowdisplaybreaks\\
    &\cdot\left(L_f^2\eta_n^2+\eta_n^2+\frac{(1+L_h\gamma_n)^2\eta_n^2}{\gamma_n^2}\right)\leq 0.
\end{align*}
Define $x_n=\min(D_1, D_2)$. Then, we have
\begin{align*}
   &\mathbb{E}\left[\|\tilde{{\theta}}_n\|^2+\|\tilde{{W}}_n|^2|\mathcal{F}_n\right] \nonumber\allowdisplaybreaks\\
   &\leq \gamma_n^2\Lambda+ (1+x_n)\left(\|\tilde{{\theta}}_n\|^2+\|\tilde{{W}}_n\|^2\right)\\
    &= \prod_{t=0}^n(1+x_t)\left(\|\tilde{{\theta}}_0\|^2+\|\tilde{{\lambda}}_0\|^2\right)\nonumber\allowdisplaybreaks\\
    &\qquad+\left(1+\sum_{t=0}^n\prod_{\tau=0}^t(1+x_{n-\tau})\right)\gamma_n^2\Lambda.
\end{align*}
Since $0\leq 1+x_n\leq 1, \forall n$ and $\lim_{n\rightarrow\infty}\gamma_n\rightarrow 0$, $\lim_{n\rightarrow\infty}\mathbb{E}\left[\|\tilde{{\theta}}_n\|^2+\|\tilde{W}_n\|^2|\mathcal{F}_n\right]\rightarrow 0$ \emph{almost surely}.
 
\end{proof}

Now we are ready to prove Theorem \ref{thm:convergence}.  Providing Lemmas~\ref{lemma2}-\ref{lemma4}, if $\frac{\eta_{n}}{\alpha_n}$ is non-increasing, we have 

\begin{align}\nonumber
     &\mathbb{E}\Big[M({\theta}_{n+1},{W}_{n+1})\Big|\mathcal{F}_n\Big]\nonumber\allowdisplaybreaks\\
    &\leq \frac{\eta_n}{\gamma_n}\gamma_n^2\Lambda+\frac{\eta_n}{\gamma_n}(1-2\gamma_n\mu_1+L_h^2\gamma_n^2)\|\tilde{{\theta}}_n\|^2\nonumber\\
    &+\frac{\eta_n}{\gamma_n}\Big(2L_f^2L_g^2\eta_n^2\|\tilde{{\theta}}_n\|^2+2L_f^2L_g^2\eta_n^2(L_f+1)^2\|\tilde{{W}}_n\|^2\Big)\nonumber\\
    &\qquad+\frac{\eta_n}{\gamma_n}\left(L_f^2\gamma_n^2+\frac{2(1+L_h\gamma_n)^2\eta_n^2L_g^2}{\gamma_n^2}\right)\|\tilde{{\theta}}_n\|^2\nonumber\allowdisplaybreaks\\
    &\qquad+\frac{2(1+L_h\gamma_n)^2\eta_n^3L_g^2(L_f+1)^2}{\gamma_n^3}\|\tilde{{W}}_n\|^2\nonumber\allowdisplaybreaks\\
    &\qquad+(1-2\eta_n\mu_2)\|\tilde{{W}}_{n}\|^2+2\eta_n^2L_g^2\|\tilde{{\theta}}_n\|^2\nonumber\allowdisplaybreaks\\
    &\qquad+2\eta_n^2L_g^2(L_f+1)^2\|\tilde{{W}}_n\|^2 \nonumber\\
     & \overset{(e1)}{\leq}\frac{\eta_n}{\gamma_n}\gamma_n^2\Lambda+\frac{\eta_n}{\gamma_n}(1-2\gamma_n\mu_1)\|\tilde{{\theta}}_n\|^2+(1-2\eta_n\mu_2)\|\tilde{{W}}_{n}\|^2\nonumber\displaybreak[0]\\
    &+ \Bigg((L_h^2+L_f^2)\gamma_n\eta_n+2(L_f^2+1)L_g^2\frac{\eta_n^3}{\gamma_n}\nonumber\allowdisplaybreaks\\
    &\qquad\qquad\qquad\qquad+\frac{2(1+L_h\gamma_n)^2\eta_n^3L_g^2}{\gamma_n^3}\Bigg)\|\tilde{{\theta}}_n\|^2 \nonumber\\
    &+2L_g^2(L_f+1)^2\Bigg(L_f^2\frac{\eta_n^3}{\gamma_n}+\gamma_n\eta_n+\frac{(1+L_h\gamma_n)^2\eta_n^3}{\gamma_n^3}\Bigg)\|\tilde{{W}}_{n}\|^2\nonumber\displaybreak[1]\\
     &\overset{(e2)}{ \leq}\max(1-2\gamma_n\mu_1, 1-2\eta_n\mu_2)\mathbb{E}\Big[M({\theta}_{n+1},{W}_{n+1})\Big|\mathcal{F}_n\Big]\nonumber\displaybreak[2]\\
    &+\!\frac{\eta_n}{\gamma_n}\gamma_n^2\Lambda\!+\! (L_h^2\!+\!L_f^2\!+\!2L_g^2(L_f\!+\!1)^2)\gamma_n\eta_n\!\!\left(\|\tilde{{\theta}}_n\|^2\!\!\!+\!\|\tilde{{W}}_n\|^2\!\right)\nonumber\displaybreak[3]\\
    &+\!\!2L_g^2(L_f\!\!+\!\!1)^2\!\left(L_f^2\!+\!{(1\!\!+\!\!L_h\gamma_n)^2}\right)\frac{\eta_n^3}{\gamma_n^3}\!\!\left(\!\|\tilde{{\theta}}_{n}\|^2\!+\!\|\tilde{{W}}_{n}\|^2\!\right).\label{eq:MSE}
\end{align}
Since $(n+1)^2\cdot\gamma_n\eta_n=\gamma_0\eta_0(n+1)^{1/3}$ and $(n+1)^2\cdot\frac{\eta_n^3}{\gamma_n^2}=\frac{\eta_0^2}{\gamma^2_0}(n+1)^{1/3}$,
multiplying both sides of (\ref{eq:MSE}) with $(n+1)^2$, we have 
\begin{align}\nonumber
    &(n+1)^2\mathbb{E}\Big[M({\theta}_{n+1},W_{n+1})\Big|\mathcal{F}_n\Big]\nonumber\allowdisplaybreaks\\
    &{\leq} (n+1)^{1/3}\gamma_0\eta_0\Lambda+n^2\mathbb{E}\Big[M({\theta}_{n},W_{n})\Big|\mathcal{F}_n\Big]\nonumber\displaybreak[0]\\
    &+\!(L_h^2\!\!+\!\!L_f^2\!\!+\!\!2L_g^2(L_f\!+\!1)^2)\alpha_0\eta_0(n\!+\!1)^{1/3}\!\left(\|\tilde{{\theta}}_n\|^2\!\!+\|\tilde{{W}}_n\|^2\!\right)\nonumber\displaybreak[1]\\
    &+\!2L_g^2(L_f\!\!+\!\!1)^2\!\!\left(L_f^2\!\!+\!\!{(1\!\!+\!L_h\gamma_n)^2}\right)\frac{\eta_0^3}{\gamma_0^3}(n\!\!+\!\!1)^{\frac{1}{3}}\!\!\left(\|\tilde{{\theta}}_{n}\|^2\!\!+\!\|\tilde{{W}}_{n}\|^2\!\right)\nonumber\displaybreak[2]\\
    &\leq(n+1)^{1/3}\gamma_0\eta_0\Lambda+ n^2\mathbb{E}\Big[M({\theta}_{n},W_{n})\Big|\mathcal{F}_n\Big]\nonumber\allowdisplaybreaks\\
    &\qquad+(n+1)^{1/3}\left(C_1\left(\|\tilde{{\theta}}_0\|^2+|\tilde{W}_0\|^2\right)\right), \label{eq: recursion}
\end{align}
where $C_1=(L_h^2+L_f^2+2L_g^2(L_f+1)^2)\alpha_0\eta_0+2L_g^2(L_f+1)^2\left(L_f^2+{(1+L_h\alpha_0)^2}\right)\frac{\eta_0^3}{\alpha_0^3}$.
Summing \eqref{eq: recursion} from time step $0$ to time step $n$, we have 
\begin{align}
    &(n+1)^2\mathbb{E}\Big[M({\theta}_{n+1},W_{n+1})\Big|\mathcal{F}_n\Big]
   \nonumber\allowdisplaybreaks\\
   \leq& \mathbb{E}\Big[M({\theta}_0,W_0)\Big]+(n+1)^{4/3}C_1(\|\tilde{\theta}_0\|^2+\|\tilde{W}_0\|^2)\nonumber\allowdisplaybreaks\\
   &\qquad\qquad+(n+1)^{4/3}\gamma_0\eta_0\Lambda. \label{eq:last_eqn}
\end{align}
Finally, dividing both sides by $(n+1)^2$ yields the results in Theorem \ref{thm:convergence}.

\bibliographystyle{IEEEtran}
\bibliography{refs}

\end{document}